\newtheorem{theorem}{Theorem}
\newtheorem{lemma}{Lemma}
\newtheorem{definition}
{Definition}
\newenvironment{proof}{\begin{trivlist}\item\noindent{\sc Proof.}}{\hfill$\Box\hspace{2mm}$\end{trivlist}}
\newenvironment{proof-of-claim}{\begin{trivlist}\item\noindent{\sc Proof of Claim.}}{\hfill $\boxtimes\hspace{2mm}$\end{trivlist}}
\title{Diffusion of Responsibility in Collective Decision Making}
\author {
    Pavel Naumov,\textsuperscript{\rm 1}
    Jia Tao\textsuperscript{\rm 2}
}
\begin{document}

\maketitle

\begin{abstract}
The term ``diffusion of responsibility'' refers to situations in which multiple agents share responsibility for an outcome, obscuring individual accountability. This paper examines this frequently undesirable phenomenon in the context of collective decision-making mechanisms.

The work shows that if a decision is made by two agents, then the only way to avoid diffusion of responsibility is for one agent to act as a ``dictator'', making the decision unilaterally. In scenarios with more than two agents, any diffusion-free mechanism is an ``elected dictatorship'' where the agents elect a single agent to make a unilateral decision.

The technical results are obtained by defining a bisimulation of decision-making mechanisms, proving that bisimulation preserves responsibility-related properties, and establishing the results for a smallest bisimular mechanism.
\end{abstract}

\section{Introduction}
Autonomous agents impact our daily lives by playing an increasingly significant role in making critical decisions. The responsibility for the outcome of such decisions is often diffused between multiple agents. For instance, in a two-car collision, the responsibility can be shared by the autonomous systems of both cars. Such situations are often undesirable because they create a ``circle of blame'' between developers and owners of the two vehicles. Diffusion of responsibility has been widely studied across social sciences~\cite{ms75pspb,fzg02pspb,llw22ichssr}, law~\cite{i20ijcr,rkfc22jrcd}, ethics~\cite{bb22aie}, and neuroscience~\cite{fdlglk16hbm}.
To ensure broad acceptance of autonomous systems by society and to protect the interests of the involved parties, it is essential to clearly define individual accountability of humans and machines for outcomes of collective decision-making by minimizing the diffusion of responsibility.
In this paper, we study the possibility of designing decision-making mechanisms that completely avoid such diffusion. 

As an example, consider a well-known collective decision-making mechanism devised by the framers of the US Constitution over two centuries ago, see Figure~\ref{constitution figure}. 
\begin{figure}[ht]
\begin{center}
\scalebox{0.5}{\includegraphics{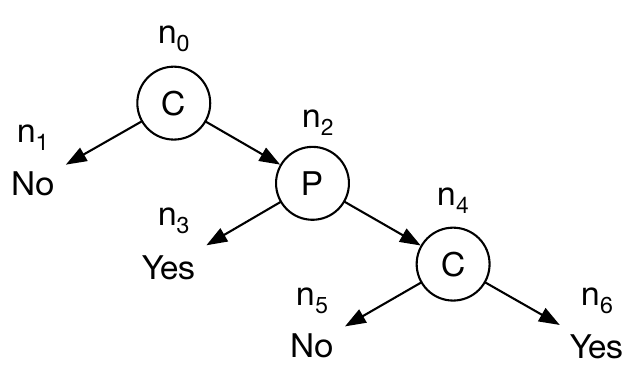}}
\caption{US Constitution mechanism}\label{constitution figure}
\end{center}
\end{figure}
If a bill is rejected by Congress (C), it is dead. If Congress approves a bill, then it is sent to the President (P), who can either sign or veto the bill. Once the bill is signed by the President, it becomes a law of the United States. If the President vetoes the bill, it goes back to Congress. At that point, Congress can either override the veto or kill the bill.

Imagine that a bill to increase funding for AI research is introduced in Congress. Congress passes the bill and sends it to the President. The President decides to veto the bill. When the bill comes back to Congress, it fails to override the veto, and the bill dies. This scenario corresponds to the decision path $n_0,n_2,n_4,n_5$
in Figure~\ref{constitution figure}. Who in this situation, Congress or the President, is responsible for the failure to pass the bill?

Responsibility is a broad term extensively studied in philosophy and law. In philosophy, the focus is on {\em moral} responsibility; in law -- on {\em culpability} or legal responsibility. Although multiple attempts to capture the concept of responsibility have been made, the definition most commonly  cited in philosophy~\cite{w17} is based on Frankfurts' [\citeyear{f69tjop}] principle of alternative possibilities:
\begin{quote}{\em 
\dots\ a person is morally responsible for what he has done only if he could have done otherwise}.
\end{quote}
This principle, sometimes referred to as ``counterfactual possibility''~\cite{c15cop}, is also used to define causality~\cite{lewis13,h16,bs18aaai}.
In this paper, we refer to the responsibility defined based on Frankfurts' principle as {\em counterfactual responsibility} or just {\em responsibility}.
Following recent works in AI~\cite{nt19aaai,ydjal19aamas,bfm21ijcai,s24aaai}, we interpret ``could have done otherwise'' as an agent {\em having a strategy to prevent the outcome no matter what the actions of the other agents are}. 

Let us go back to the decision path in our example ending at node $n_5$. Because the President had a strategy (``sign'') to make the bill into law, the President is counterfactually responsible for the failure to pass the bill. At the same time, Congress also had a strategy to make the bill into law by overriding the presidential veto. Hence, Congress is also counterfactually responsible for the failure to pass the bill. If a decision-making mechanism allows situations when more than one agent is responsible at the same leaf node, then we say that the mechanism admits the {\em diffusion of responsibility}. 
The diffusion of responsibility might lead to a bystander effect or ``circle of blame''. As discussed earlier, in general, the diffusion of responsibility is an undesirable property of a decision-making mechanism. 

Of course, the decision-making mechanism depicted in Figure~\ref{constitution figure} can be modified to eliminate the diffusion of responsibility. In fact, this was the case with the Articles of Confederation, which preceded the US Constitution. The Articles did not establish any sort of national executive, allowing the Congress of the Confederation to pass the legislation without a threat of a veto~\cite{w87psq}, see Figure~\ref{continental figure}. 

\begin{figure}[ht]
\begin{center}
\scalebox{0.5}{\includegraphics{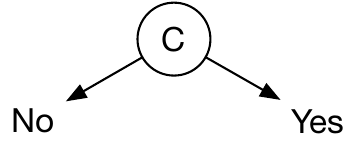}}
\caption{Articles of Confederation}\label{continental figure}
\end{center}
\end{figure}
However, Figure~\ref{continental figure} has an obvious problem. It makes Congress a {dictator} by giving it unlimited power. By a {\em dictator} in a decision-making mechanism, we mean an agent who has an upfront strategy to achieve each possible decision, no matter how the other agents act. It is worth pointing out that Congress is also a dictator in the mechanism depicted in Figure~\ref{constitution figure}. Indeed, to reject the bill, Congress can simply reject it outright (by moving to node $n_1$). To pass the bill, Congress needs to send it to the President (by moving to $n_2$) and, if the President vetoes the bill (by moving to $n_4$), Congress can simply overwrite the veto (by moving to $n_6$).

\begin{figure}[ht]
\begin{center}
\scalebox{0.5}{\includegraphics{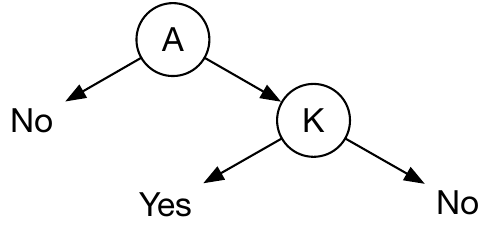}}
\caption{Maryland, Rhode Island, and Connecticut}\label{colony figure}
\end{center}
\vspace{-4mm}
\end{figure}
Are there decision-making mechanisms that do not have a dictator? Yes, there are. In fact, such mechanisms have been used by the 13 colonies that preceded the Confederation. Three of them used the mechanism depicted in Figure~\ref{colony figure} and the others the one shown in Figure~\ref{colony w gov figure}~\cite{w87psq}. Under the mechanism in Figure~\ref{colony figure}, the English King (K) had the absolute power to veto the bills passed by the legislative body of a colony, called the Assembly (A). 
\begin{figure}[ht]
\begin{center}
\scalebox{0.5}{\includegraphics{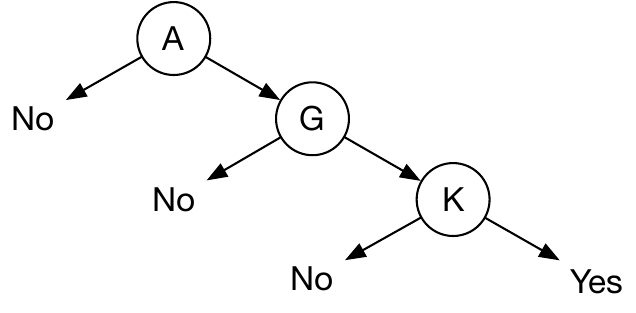}}
\caption{The other 10 colonies}\label{colony w gov figure}
\end{center}
\vspace{-4mm}
\end{figure}
Under the mechanism in Figure~\ref{colony w gov figure}, the Governor of the colony, appointed by the King, also had absolute veto power. Note that all involved parties had an upfront strategy to guarantee that the bill would not become law, but none of them had an upfront strategy to make it into law. 

The mechanism depicted in Figure~\ref{colony figure} has been in effect in England for many centuries. In practice, however, the last English monarch who vetoed (withheld ``royal assent'') a bill passed by the English Parliament was Queen Anne in 1707~\cite{w87psq}. While English monarchs did not withhold royal assent in England, they actively did this in the colonies. This fact was {\em the first} on the long list of complaints against the King listed in the American Declaration of Independence: ``He has refused his Assent to Laws, the most wholesome and necessary for the common good." 

Although the mechanisms shown in Figure~\ref{colony figure} and Figure~\ref{colony w gov figure} do not have a dictator, they both admit the diffusion of responsibility. In the US Constitution mechanism, shown in Figure~\ref{constitution figure}, the diffusion happens when a bill is rejected by Congress after it was vetoed by the President. In the mechanisms in Figure~\ref{colony figure} and Figure~\ref{colony w gov figure}, the diffusion happens when the bill becomes the law because all involved parties had a strategy to prevent this. It is probably worth mentioning that what an agent is responsible for does not have to be negative. If it is a negative thing, the responsible agent is blameworthy; if it is positive, then the agent is praiseworthy.

So far, we have examined mechanisms that (i) allow diffusion and have a dictator, (ii) allow diffusion and have no dictator, and (iii) have a dictator and do not allow diffusion. Are there mechanisms that eliminate diffusion without introducing a dictatorship? Yes, there are. For example, a majority vote by a paper ballot. However, in this paper, we focus on the mechanisms where the agents act in a consecutive order and each of them has perfect information about the previous actions, just like those in Figures~\ref{constitution figure},~\ref{continental figure}, \ref{colony figure}, and \ref{colony w gov figure}. For such mechanisms, the answer to the above question depends on the number of agents. In Theorem~\ref{two players main theorem}, we prove that {\bf any two-agent (consecutive) decision-making mechanism that does not allow diffusion of responsibility must be a dictatorship}. For the case with three agents, Figure~\ref{elected figure} shows one of many possible examples of a decision-making mechanism that has no dictator and is diffusion-free. In this hypothetical example, the Assembly, after passing a bill, can choose to send it to either the Governor or the King. If the Governor is OK with the bill, the Governor returns it back to the Assembly for the final round of voting. If the Governor has concerns about the bill, the bill is sent to the King. In both cases, the King's decision is final.

\begin{figure}[ht]
\begin{center}
\vspace{-2mm}
\scalebox{0.5}{\includegraphics{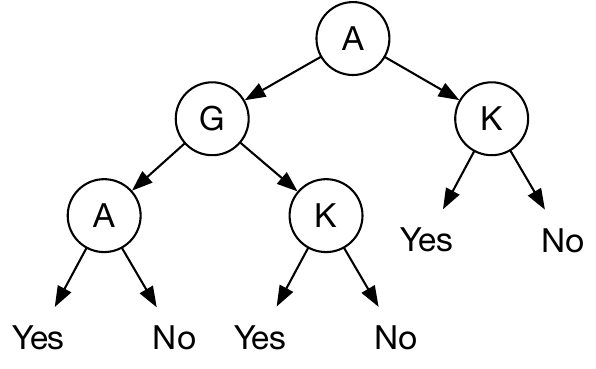}}
\caption{Hypothetical mechanism}\label{elected figure}
\end{center}
\vspace{-4mm}
\end{figure}

One might argue that, in most instances, the diffusion of responsibility and dictatorship are undesirable properties in collective decision-making. Figure~\ref{elected figure} shows that they both can be avoided. At the same time, the mechanism shown in this figure is what we call an {\em elected dictatorship}. 
Intuitively, an elected dictatorship is a mechanism that selects a single agent to be the ``decision-maker'' and this agent decides between all available alternatives. 
In the example depicted in Figure~\ref{elected figure}, the election of a dictator is completed when the decision path reaches pre-leaf nodes (nodes of height 1). Elected dictators in these nodes are agents A, K, and K.

The mechanisms depicted in Figure~\ref{constitution figure}, Figure~\ref{colony figure}, and Figure~\ref{colony w gov figure} are {\em not} elected dictatorships, but they all allow diffusion of responsibility. 
In Theorem~\ref{first main result}, we prove our main result: {\bf any decision-making mechanism that does not allow diffusion of responsibility must be an elected dictatorship}. The proof of this theorem, perhaps unexpectedly, turns out to be highly non-trivial. It uses the bisimulation technique. 
Theorem~\ref{two players main theorem} is a much simpler observation that can be proven independently or, as we have chosen to do, derived from Theorem~\ref{first main result}.  The rest of the paper is structured as follows. First, we formally define a decision-making mechanism and the related notions of responsibility, diffusion, dictatorship, and elected dictatorship. Then, we prove our two results. The last section concludes the presentation.

\section{Decision-Making Mechanisms}

Motivated by the introductory examples, we only consider {\em perfect information} mechanisms in which each agent knows about the actions taken by the previous agents. We also do not consider mechanisms, like ballot voting,  where the agents make their choices ``concurrently''. Finally, we assume that the decision produced by the mechanism is deterministic and does not depend on an ``initial state''.

\begin{definition}\label{game}
A mechanism is a tuple $(Ag,Alt,T,act,\ell)$, where
\begin{enumerate}
    \item $Ag$ is a set of ``agents'',
    \item $Alt$ is a set of ``alternatives'' such that $|Alt|\ge 2$,
    \item $T$ is a finite rooted tree, 
    \item $act$ is a labeling function that maps each non-leaf node $n$ of tree $T$ to an agent $act(n)\in Ag$,
    \item $\ell$ is a surjective labeling function that maps each leaf node $n$ of tree $T$ to an alternative $\ell(n)\in Alt$.
\end{enumerate}
\end{definition}
By $\ell^{-1}(Y)$ we denote the set of all leaf nodes labeled with an alternative $Y\in Alt$. 

In the US Constitution mechanism depicted in Figure~\ref{constitution figure}, the set $Ag$ could be any set containing agents $P$ and $C$. Set $Alt$ consists of two possible alternatives: Yes and No. The value of function $act$ for each non-leaf node is shown inside the circle representing the node. The value of function $\ell$ (either Yes or No) for each leaf node is shown at the leaf node.  In Definition~\ref{game}, we require function $\ell$ to be surjective to avoid ``fictitious'' alternatives that can never be chosen by the mechanism. The elimination of such alternatives is not significant, but it simplifies some of our definitions. Additionally, it makes the following observation true.

\begin{lemma}\label{no singleton mechanisms}
A mechanism contains at least two leaf nodes.     
\end{lemma}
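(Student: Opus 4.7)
The plan is to derive the lemma directly from the cardinality constraints built into Definition~\ref{game}. Specifically, part~(2) requires $|Alt|\ge 2$, and part~(5) requires the leaf-labeling function $\ell$ to be surjective onto $Alt$. Since a surjection from the set of leaves onto $Alt$ exists, the set of leaves must have cardinality at least $|Alt|$, hence at least~$2$.

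Concretely, I would argue by contraposition: suppose the tree $T$ has at most one leaf. Since $T$ is a finite rooted tree it has at least one leaf, so it has exactly one leaf $n$. Then the image of $\ell$ is the singleton $\{\ell(n)\}$, which has cardinality~$1<2\le|Alt|$, contradicting surjectivity of $\ell$. Therefore $T$ has at least two leaves.

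I do not anticipate any obstacle here; the statement is essentially a bookkeeping consequence of the surjectivity clause in the definition, included only to ensure the tree is non-trivial before subsequent arguments start manipulating leaves and their labels.
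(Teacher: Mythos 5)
Your proof is correct and follows essentially the same route as the paper's: both derive a contradiction between the surjectivity of $\ell$ (item~5 of Definition~\ref{game}) and the requirement $|Alt|\ge 2$ (item~2) when there are fewer than two leaves. The only cosmetic difference is that you explicitly note a finite rooted tree has at least one leaf, which the paper leaves implicit.
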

\begin{proof}
If a mechanism contains fewer than two leaf nodes, then the range of leaf-labeling function $\ell$ has a size less than two. Hence, the set $|Alt|<2$ because function $\ell$ is surjective by item~5 of Definition~\ref{game}. The latter contradicts item~2 of the same definition.    
\end{proof}


The key to our definitions of counterfactual responsibility, dictatorship, and elected dictatorship is the notion of the strategy of an agent to achieve a specific set of alternatives. Instead of defining a strategy explicitly, it is more convenient to define a set $win_a(S)$ of ``winning'' nodes from which an agent $a$ has a strategy to guarantee that the alternative chosen by the mechanism belongs to the set $S$. As is common in game theory, we define this set by backward induction:

\begin{definition}\label{df:winning set}
For any set $S$ of alternatives, the set $win_a(S)$ is the minimal set of nodes such that
\begin{enumerate}
\item $\bigcup \ell^{-1}(S)\subseteq win_a(S)$,
\item for any non-leaf node $n$ such that $act(n)=a$, if {\bf\em at least one} child of node $n$ belongs to the set $win_a(S)$, then node $n\in win_a(S)$,
\item for any non-leaf node $n$ such that $act(n)\neq a$, if {\bf\em all} children of node $n$ belong to the set $win_a(S)$, then node $n\in win_a(S)$.
\end{enumerate}
\end{definition}

In our US Constitution mechanism example in Figure~\ref{constitution figure}, the set $win_C(\{\text{Yes}\})$ consists of nodes $n_0$, $n_2$, $n_4$, and $n_6$. Also, $win_C(\{\text{No}\})=\{n_0,n_1,n_4,n_5\}$.


By an {\em ancestor} of a node, we mean any node on the simple path that connects the node with the root of the tree. Ancestors include the node itself and the root. By $Anc(n)$ we denote the set of all ancestors of a node $n$. 
For instance, $Anc(n_5)=\{n_5,n_4,n_2,n_0\}$, see Figure~\ref{constitution figure}.  By a {\em subtree rooted at a node} $n$ we mean the set of all nodes for whom $n$ is an ancestor. Thus, the subtree includes the node $n$ itself.

Following the standard practice in mathematics, we use ``if'' instead of ``iff'' in the next definition and the rest of {\em definitions} (not lemmas or theorems) in this paper.
\begin{definition}\label{responsible}
Agent $a$ is (counterfactually) {\bf\em responsible} at leaf $n$ if $Anc(n)\cap win_a(Alt\setminus \{\ell(n)\})\neq\varnothing$. 
\end{definition}

Note that $\ell(n_5)=\text{No}$, see Figure~\ref{constitution figure}. Thus, $$win_C(Alt\setminus \{\ell(n_5)\})=win_C(\{\text{Yes}\})=
\{n_0,n_2,n_4,n_6\}.$$ At the same time, $Anc(n_5)=\{n_5,n_4,n_2,n_0\}$. Hence, agent $C$ is responsible at the leaf node $n_5$. It is easy to verify that agent $C$ is also responsible at all leaf nodes in this mechanism. Agent $P$ is only responsible at node $n_5$.

\begin{definition}\label{diffusion}
A mechanism allows {\bf\em diffusion of responsibility} if it has a leaf where at least two agents are counterfactually responsible.    
\end{definition}
In our example in Figure~\ref{constitution figure}, such a leaf is $n_5$. A mechanism is {\bf diffusion-free} if it does not allow diffusion of responsibility.

\begin{definition}\label{dict at node}
An agent $a$ is a {\bf\em dictator at node $n$} if node $n$ belongs to $win_a(\{Y\})$ for each alternative $Y\in Alt$. 
\end{definition}
In our example in Figure~\ref{constitution figure}, agent $C$ is a dictator at nodes $n_0$ and $n_4$. Agent $P$ is not a dictator at any of the nodes. 
In Figure~\ref{elected figure},
agents $A$, $K$, and $K$ are the dictators in pre-leaf nodes (nodes of height 1). 

\begin{definition}\label{dictator}
An agent $a$ is a {\bf\em dictator} if the agent is a dictator at the root node.
\end{definition}
Agent $C$ is a dictator in the mechanism depicted in Figure~\ref{constitution figure}. 

\begin{definition}\label{elected dictatorship}
A mechanism is an {\bf\em elected dictatorship} if, for each leaf node $n$, there is a dictator at an ancestor node of~$n$.    
\end{definition}
The mechanism depicted in Figure~\ref{elected figure} (as well as the one in Figure~\ref{constitution figure}) is an example of an elected dictatorship.

\section{Technical results}

In this section, we establish that any decision-making mechanism that does not allow diffusion is, by necessity, an elected dictatorship. Additionally, we show that when there are only two agents, such a mechanism must be a dictatorship. These results are formally presented as Theorem~\ref{first main result} and Theorem~\ref{two players main theorem} at the end of this section.

To prove the first theorem, we introduce the notion of {\em bisimulation} of two decision-making mechanisms. Then, we show that bisimulation preserves the core properties of mechanisms: responsibility, diffusion, and elected dictatorship. Following this, we define a {\em canonical form} of a decision-making mechanism as a smallest mechanism totally bisimular to the given one. Finally, we prove Theorem~\ref{first main result} for the mechanisms in canonical form. Theorem~\ref{two players main theorem} follows from Theorem~\ref{first main result}.

\subsection{Bisimulation}

Before introducing the notion of bisimulation in Definition~\ref{def:bisimulation}, we specify the terminology used in this definition.

\begin{definition}
A directed edge from a parent to a child is labeled with an agent $a$ if the parent node is labeled with $a$.    
\end{definition}
In Figure~\ref{constitution figure}, the directed edge from node $n_2$ to node $n_4$ is labeled with agent $P$. The one from node $n_4$ to node $n_5$ is labeled with agent $C$. 

A {\em trivial} path is a path consisting of a single node.
\begin{definition}\label{df:to}
$n\stackrel{a}{\to}m$ if there is a directed (possibly trivial) path from node $n$ to node $m$ and each edge along the path is labeled with agent $a$.
\end{definition}
As an example, $n_0\stackrel{C}{\to}n_2$ and $n_2\stackrel{C}{\to}n_2$ in Figure~\ref{constitution figure}. 
Note that the relation $n\stackrel{a}{\to}m$ is not trivial because Definition~\ref{game} does {\em not} assume that consecutive nodes have different labels. 
The next lemma follows from Definition~\ref{df:to}.
\begin{lemma}\label{no go lemma}
Suppose $a$ is an agent and node $n$ is such that either it is a leaf node or $act(n)\neq a$. Then, $n\stackrel{a}{\to}m$ implies $n=m$.
\end{lemma}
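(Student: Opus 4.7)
The plan is to argue directly from Definition~\ref{df:to} by ruling out any non-trivial path. By that definition, $n\stackrel{a}{\to}m$ means there is a directed path from $n$ to $m$ (possibly consisting of just the single node $n$), and every edge along that path is labeled with $a$. So it suffices to show that, under either of the two hypotheses on $n$, no non-trivial path starting at $n$ can have its first edge labeled with $a$; the path must then be trivial and hence $m=n$.

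First, I would handle the leaf case. If $n$ is a leaf node of $T$, then $n$ has no children in the tree, so there is no directed edge leaving $n$. Hence the only directed path starting at $n$ is the trivial one, forcing $m=n$.

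Next, I would handle the case $act(n)\neq a$. Suppose for contradiction that the path $n\stackrel{a}{\to}m$ is non-trivial. Then it contains at least one edge, and its first edge leaves $n$ toward some child $n'$ of $n$. By the edge-labeling convention, this edge carries the label $act(n)$ (the label of its parent endpoint). Since every edge along the path is labeled $a$ by hypothesis, we get $act(n)=a$, contradicting $act(n)\neq a$. Therefore the path is trivial and $m=n$.

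I do not expect a main obstacle here: the lemma is essentially a direct unpacking of the definition of edge labels together with Definition~\ref{df:to}. The only subtle point is remembering that the labeling function $act$ is defined only on non-leaf nodes, which is precisely why the statement separates the leaf case from the $act(n)\neq a$ case rather than combining them.
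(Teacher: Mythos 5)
Your proof is correct and matches the paper's intent: the paper simply asserts that the lemma follows from Definition~\ref{df:to}, and your argument is the direct unpacking of that definition (a leaf has no outgoing edges; a non-leaf with $act(n)\neq a$ cannot have a first edge labeled $a$). No issues.
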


The next lemma follows from Definition~\ref{df:winning set}.
\begin{lemma}\label{up down lemma}
For any set $S\subseteq Alt$, any two distinct agents $a,b\in Ag$, and any two nodes $n,m$ such that $n\stackrel{a}{\to} m$,
\begin{enumerate}
    \item if $m\in win_a(S)$, then $n\in win_a(S)$,\label{up lemma}
    \item if $n\in win_b(S)$, then $m\in win_b(S)$.\label{down lemma}
\end{enumerate}
\end{lemma}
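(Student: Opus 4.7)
The plan is to prove both items simultaneously by induction on the length $k$ of the directed path $n = n_0, n_1, \ldots, n_k = m$ witnessing $n \stackrel{a}{\to} m$. The base case $k = 0$ is immediate, since then $n = m$ and both implications are trivial. For the inductive step I would isolate a suitable endpoint of the path and invoke the appropriate closure clause of Definition~\ref{df:winning set}, then apply the inductive hypothesis to the shorter $a$-path that remains.

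For item~1, I would peel off the \emph{last} edge. The parent $n_{k-1}$ of $m$ satisfies $act(n_{k-1}) = a$ by the definition of an $a$-labeled edge, and $m$ is a child of $n_{k-1}$ that lies in $win_a(S)$. Clause~2 of Definition~\ref{df:winning set} (only one winning child is required when the acting agent is $a$ itself) then places $n_{k-1}$ in $win_a(S)$. The inductive hypothesis applied to the path $n_0, \ldots, n_{k-1}$ yields $n \in win_a(S)$.

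For item~2, I would instead peel off the \emph{first} edge. Its label forces $act(n_0) = a \neq b$, and $n_0$ is non-leaf because it has the child $n_1$. Here I would use the minimality clause of Definition~\ref{df:winning set}: since $win_b(S)$ is the smallest set closed under clauses 1--3, any non-leaf node $v$ with $act(v) \neq b$ that belongs to $win_b(S)$ must have been placed there by clause~3, so all children of $v$ lie in $win_b(S)$. Taking $v = n_0$ gives $n_1 \in win_b(S)$, and the inductive hypothesis applied to the $a$-path $n_1, \ldots, n_k$ delivers $m \in win_b(S)$.

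The only non-routine point is the minimality argument used for item~2: one must read Definition~\ref{df:winning set} in its converse form, that every element of $win_b(S)$ is generated by one of clauses~1--3. This is the standard characterization of a least fixed point, but it is worth stating carefully, and it is essential that $a \neq b$ so that the first edge of the path forces $n_0$ to fall under clause~3 rather than clause~2. Item~1 requires no such subtlety, since the ``closed under'' direction of clause~2 is used directly.
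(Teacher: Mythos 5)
Your proof is correct; the paper itself offers no argument for this lemma beyond the remark that it ``follows from Definition~\ref{df:winning set}'', and your induction on the path length, peeling the last edge for item~1 via clause~2 and the first edge for item~2 via the least-fixed-point inversion of clause~3, is exactly the routine unwinding the authors intend. You are right that the only point deserving care is reading the minimal-set definition in its converse direction (a non-leaf node $v$ with $act(v)\neq b$ can only enter $win_b(S)$ through clause~3, so all its children must already be in the set), and your handling of it, including the role of the hypothesis $a\neq b$, is sound.
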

In Figure~\ref{constitution figure} example, $n_0\in win_C(\{\text{No}\})$ because $n_0\stackrel{C}{\to}n_1$ and $n_1\in win_C(\{\text{No}\})$.
Also, $n_4\in win_C(\{\text{Yes}\})$ because $n_2\stackrel{P}{\to}n_4$ and $n_2\in win_C(\{\text{Yes}\})$. 

The concept of bisimulation of two transition systems is well-known in the literature~\cite{s11}. Intuitively, two systems are bisimular if they exhibit the same ``behavior''. What exactly this means depends on the intended application. In the definition below we define bisimulation in such a way that it preserves core properties of counterfactual responsibility.

\begin{definition}\label{def:bisimulation}
A {\bf\em bisimulation} $R$ of mechanisms $(Ag,Alt,$ $T,act,\ell)$ and $(Ag,Alt,T',act',\ell')$ is a relation between nodes of trees $T$ and $T'$ such that, 
\begin{enumerate}

    \item 
    for any leaf nodes $n$ and $n'$ of trees $T$ and $T'$, respectively, 
    if $nRn'$,
    then $\ell(n)=\ell'(n')$,

    \item 
for any nodes $n,m$ of tree $T$ and any node $n'$ of tree $T'$,
    if $n\stackrel{a}{\to}m$ and $nRn'$, then there is a node $m'$ of tree $T'$ such that $n'\stackrel{a}{\to}m'$ and $mRm'$,

    \item  
for any node $n$ of tree $T$ and any node $n',m'$ of tree $T'$,
    if $n'\stackrel{a}{\to}m'$ and $nRn'$, then there is a node $m$ of tree $T$ such that $n\stackrel{a}{\to}m$ and $mRm'$,
    
    \item for any nodes $n,m$ of tree $T$ and any node $m'$ of tree $T'$, if $mRm'$ and $n$ is an ancesstor of $m$, then there is an ancestor $n'$ of $m'$ such that $nRn'$.
\end{enumerate}
\end{definition}
We say that two mechanisms are {\bf bisimular} if there is a bisimulation of them.
Note that we assume that bisimular mechanisms have the same set $Ag$ of agents and the same set $Alt$ of alternatives.

\begin{figure}[ht]
\begin{center}
\vspace{-2mm}
\scalebox{0.5}{\includegraphics{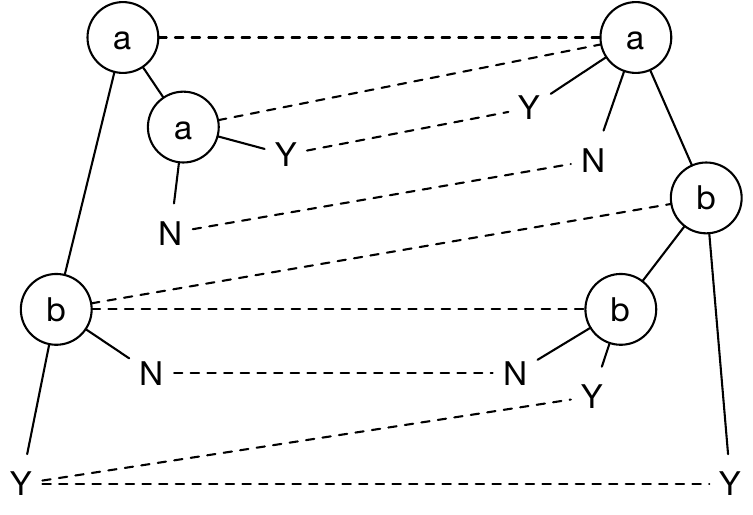}}
\caption{Dashed lines show a bisimulation of two decision-making mechanisms.}\label{bisimulation example}
\end{center}
\vspace{-4mm}
\end{figure}

Figure~\ref{bisimulation example} shows an example of a bisimulation of two mechanisms. 

\begin{definition}\label{total}
Bisimulation $R$ is {\bf\em total} if for each node $n$ there is a node $n'$ such that $nRn'$ and for each node $n'$ there is a node $n$ such that $nRn'$.   
\end{definition}
The bisimulation shown in Figure~\ref{bisimulation example} is total.

In this paper, by $height(n)$ of a node $n$ we mean the number of edges along the longest downward path from the node $n$ to a leaf node. 
For example, $height(n_2)=2$ and $height(n_5)=0$ in the mechanism shown in Figure~\ref{constitution figure}.

\subsection{Properties of a Bisimulation}

In this subsection, we assume a fixed pair of mechanisms $(Ag,Alt,T,act,\ell)$ and $(Ag,Alt,T',act',\ell')$ and a bisimulation $R$ of these mechanisms. The next key lemma establishes that bisimulation preserves the {\em strategic power} of each agent. Its proof is in the appendix. 

\begin{lemma}\label{win-win lemma}
$n\in win_a(S)$ iff $n'\in win_a(S)$ for 
any subset $S\subseteq Alt$ of alternatives and
any two nodes $n$ and $n'$ of trees $T$ and $T'$, respectively, such that $nRn'$. 
\end{lemma}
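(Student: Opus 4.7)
The plan is to prove the biconditional by well-founded induction on the combined height $h(n,n'):=\text{height}(n)+\text{height}(n')$. Since conditions~2 and~3 of Definition~\ref{def:bisimulation} are mirror images of each other, it suffices to establish the forward implication $n \in win_a(S) \Rightarrow n' \in win_a(S)$; the converse then follows by applying the same argument to $R^{-1}$, which is itself a bisimulation of the swapped pair of mechanisms.

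The base case, both $n$ and $n'$ being leaves, is immediate from condition~1. For the mixed case where $n$ is a leaf and $n'$ is not, I would apply condition~3 to each edge $n' \stackrel{b'}{\to} c'$ of $T'$ and invoke Lemma~\ref{no go lemma} to conclude $nRc'$, since every $\stackrel{b'}{\to}$-path out of the leaf $n$ must be trivial. The inductive hypothesis applied to each pair $(n,c')$ then shows that every child of $n'$ has the same $win_a(S)$-status as $n$, and Definition~\ref{df:winning set} settles the claim regardless of $act'(n')$. The symmetric mixed case ($n'$ a leaf, $n$ not) is handled dually using condition~2.

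The substantive case is when both $n$ and $n'$ are non-leaves, and I split on $b:=act(n)$. If $b = a$, I pick a child $c$ of $n$ witnessing $n \in win_a(S)$, apply condition~2 to $n \stackrel{a}{\to} c$ to obtain $m'$ with $n' \stackrel{a}{\to} m'$ and $cRm'$, invoke the inductive hypothesis (the measure strictly decreases because $\text{height}(c) < \text{height}(n)$ while $\text{height}(m') \leq \text{height}(n')$) to get $m' \in win_a(S)$, and finally close via Lemma~\ref{up down lemma}(\ref{up lemma}). If $b \neq a$, I aim instead to show that every child $c'$ of $n'$ lies in $win_a(S)$: for each such $c'$, condition~3 applied to $n' \stackrel{b'}{\to} c'$ yields $m$ in $T$ with $n \stackrel{b'}{\to} m$ and $mRc'$, and either Lemma~\ref{no go lemma} (when $b' \neq b$) forces $m = n$, or Lemma~\ref{up down lemma}(\ref{down lemma}) (applicable because $b \neq a$) propagates $n \in win_a(S)$ down the $b$-chain to yield $m \in win_a(S)$; the inductive hypothesis then transfers this to $c'$.

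The main obstacle I expect is this last sub-case, $b \neq a$ with $b' = b$, where the node $m$ produced by condition~3 need not coincide with $n$ but can lie arbitrarily deep below $n$ along the $b$-chain. The argument hinges on Lemma~\ref{up down lemma}(\ref{down lemma}), which propagates $win_a(S)$-membership exactly along same-agent chains whose agent differs from $a$. Regardless of whether $m = n$ or $m$ is a proper descendant of $n$, the induction makes strict progress on $h$: if $m = n$ the decrease comes from $\text{height}(c') < \text{height}(n')$, and otherwise from $\text{height}(m) < \text{height}(n)$, so the recursion is well-founded in every branch.
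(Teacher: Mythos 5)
Your proposal is correct and follows essentially the same route as the paper's proof: induction on $height(n)+height(n')$, a case split on whether the acting agent at the relevant node equals $a$, and closure via conditions 2--3 of Definition~\ref{def:bisimulation} together with Lemmas~\ref{no go lemma} and~\ref{up down lemma} (the paper simply writes out both directions of the biconditional and folds your mixed leaf cases into its case analysis via a without-loss-of-generality assumption). One small imprecision: $R^{-1}$ is \emph{not} literally a bisimulation under Definition~\ref{def:bisimulation}, because item~4 (the ancestor-lifting clause) is stated in one direction only; your symmetry argument still goes through, but you should justify it by noting that the forward-implication proof uses only clauses 1--3, which are symmetric under swapping the two mechanisms, rather than by asserting that $R^{-1}$ satisfies the full definition.
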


\begin{lemma}\label{dict iff dict}
If $nRn'$ and an agent $a$ is a dictator at node $n$, then  $a$ is a dictator at node $n'$.
\end{lemma}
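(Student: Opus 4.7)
The plan is to derive this directly from Lemma~\ref{win-win lemma} (the strategic-power preservation lemma), which is the workhorse already set up for exactly this kind of transfer argument. Since being a dictator at a node is defined purely in terms of membership in the winning sets $win_a(\{Y\})$, and since bisimulation-related nodes have identical winning-set membership by Lemma~\ref{win-win lemma}, the conclusion will follow by unfolding the definition and quantifying over alternatives.

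Concretely, I would proceed as follows. First, assume $nRn'$ and that $a$ is a dictator at $n$. By Definition~\ref{dict at node}, this means $n \in win_a(\{Y\})$ for every $Y \in Alt$. Next, fix an arbitrary alternative $Y \in Alt$ and apply Lemma~\ref{win-win lemma} to the singleton set $S = \{Y\} \subseteq Alt$ and the pair $n, n'$ with $nRn'$; this yields $n' \in win_a(\{Y\})$. Since $Y$ was arbitrary, $n' \in win_a(\{Y\})$ for every $Y \in Alt$. Applying Definition~\ref{dict at node} once more, this is precisely the statement that $a$ is a dictator at $n'$.

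There is essentially no obstacle here, since all of the real work has been absorbed into Lemma~\ref{win-win lemma}. The only thing to be slightly careful about is that the definition of dictator quantifies over every alternative separately using singleton sets, rather than using $win_a(Alt)$; once that is noted, the argument is a one-line application of the previous lemma.
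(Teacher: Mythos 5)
Your proposal is correct and matches the paper's proof essentially verbatim: both fix an arbitrary alternative $Y\in Alt$, note $n\in win_a(\{Y\})$ from Definition~\ref{dict at node}, and transfer this to $n'$ via Lemma~\ref{win-win lemma} applied to the singleton $S=\{Y\}$. No differences worth noting.
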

\begin{proof}
Consider any alternative $Y\in Alt$. By Definition~\ref{dict at node}, it suffices to show that $n'\in win_a(\{Y\})$.

Note that $n\in win_a(\{Y\})$ by Definition~\ref{dict at node} and the assumption of the lemma that agent $a$ is a dictator at node~$n$. 
Therefore,
$n'\in win_a(\{Y\})$
by the assumption $nRn'$ of the lemma and Lemma~\ref{win-win lemma}. 
\end{proof}

\begin{lemma}\label{31-jul-d}
If bisimulation $R$ is total and every node $n$ of tree $T$ has a dictator at an ancestor of $n$, then every node $n'$ of tree $T'$ has a dictator at an ancestor of $n'$.
\end{lemma}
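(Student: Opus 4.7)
The plan is to start from an arbitrary node $n'$ of tree $T'$ and walk through the totality condition back into $T$, use the hypothesis there, and then transport the dictator witness back to $T'$ via property~4 of bisimulation together with Lemma~\ref{dict iff dict}.

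Concretely, I would first invoke Definition~\ref{total} (totality of $R$) to obtain a node $n$ of tree $T$ with $nRn'$. By the assumption of the lemma, $n$ has some dictator; that is, there exist an agent $a$ and an ancestor $m$ of $n$ in $T$ such that $a$ is a dictator at $m$.

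Next, I would apply condition~4 of Definition~\ref{def:bisimulation} to transfer this ancestor to tree $T'$. Instantiating that condition with its ``$m$'' being our $n$, its ``$m'$'' being our $n'$, and its ``$n$'' being our $m$, the hypotheses $nRn'$ and ``$m$ is an ancestor of $n$'' yield an ancestor $m'$ of $n'$ in $T'$ with $mRm'$. Finally, since $a$ is a dictator at $m$ and $mRm'$, Lemma~\ref{dict iff dict} gives that $a$ is a dictator at $m'$, so $m'$ is the desired ancestor of $n'$ witnessing the elected-dictatorship condition.

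The argument is essentially bookkeeping; the only subtle step is making sure that property~4 is applied in the right direction and that ``ancestor'' is used in the inclusive sense (a node is an ancestor of itself), so that the case where the dictator already sits at $n$ (respectively $n'$) is not accidentally excluded. I do not anticipate any real obstacle beyond keeping the variable roles in Definition~\ref{def:bisimulation} straight.
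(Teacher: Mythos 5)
Your proposal is correct and follows exactly the paper's own argument: totality to pull $n'$ back to some $n$ with $nRn'$, the hypothesis to get a dictator at an ancestor $m$ of $n$, item~4 of Definition~\ref{def:bisimulation} to obtain an ancestor $m'$ of $n'$ with $mRm'$, and Lemma~\ref{dict iff dict} to transfer the dictatorship. Your careful instantiation of the variable roles in item~4 and the remark about the inclusive sense of ``ancestor'' are both consistent with the paper's conventions.
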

\begin{proof}
Consider any node $n'$ of tree $T'$. By Definition~\ref{total}, there is a node $n$ of tree $T$ such that
\begin{equation}\label{31-jul-c}
nRn'.    
\end{equation}
By the assumption of the lemma, there is a dictator agent $a$ at an ancestor $m$ of node $n$. By item~4 of Definition~\ref{def:bisimulation} and statement~\eqref{31-jul-c}, there is an ancestor $m'$ of node $n'$ such that $mRm'$. Then, $a$ is a dictator at node $m'$ by Lemma~\ref{dict iff dict}.   
\end{proof}

\begin{lemma}\label{resp iff resp}
For any two leaf nodes $n$ and $n'$, if $nRn'$ and an agent $a$ is responsible at $n$, then  $a$ is responsible at $n'$.
\end{lemma}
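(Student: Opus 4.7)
The plan is to unpack Definition~\ref{responsible} on both sides and transport the witnessing data across the bisimulation using results already in hand. The hypothesis that $a$ is responsible at $n$ supplies some ancestor $p\in Anc(n)$ with $p\in win_a(Alt\setminus\{\ell(n)\})$. My goal is to exhibit an ancestor $p'$ of $n'$ with $p'\in win_a(Alt\setminus\{\ell'(n')\})$, which by Definition~\ref{responsible} is precisely the statement that $a$ is responsible at $n'$.

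The argument proceeds in three short steps. First, since $n$ and $n'$ are leaves with $nRn'$, item~1 of Definition~\ref{def:bisimulation} yields $\ell(n)=\ell'(n')$, so the two ``bad alternative'' sets coincide: $Alt\setminus\{\ell(n)\}=Alt\setminus\{\ell'(n')\}$. Second, I invoke item~4 of Definition~\ref{def:bisimulation} on the pair $nRn'$ together with the ancestor $p$ of $n$ to obtain an ancestor $p'$ of $n'$ such that $pRp'$. Third, I apply Lemma~\ref{win-win lemma} to $pRp'$ and the set $S=Alt\setminus\{\ell(n)\}$ to lift $p\in win_a(S)$ to $p'\in win_a(S)$. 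Combining these gives $p'\in Anc(n')\cap win_a(Alt\setminus\{\ell'(n')\})$, as required.

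I do not anticipate a genuine obstacle here: the clauses of Definition~\ref{def:bisimulation} were engineered precisely so that leaf labels transfer across $R$ (item~1) and ancestors transport across $R$ (item~4), while Lemma~\ref{win-win lemma} has already done the heavy lifting of preserving winning-set membership. The only point requiring a little care is matching the quantifier structure of item~4 correctly: its statement holds a target node $m'$ fixed and existentially produces the ancestor $n'$, so my instantiation uses $m=n$ and $m'=n'$ in the roles of the ``base'' pair and produces the desired ancestor $p'$ of $n'$ from the given ancestor $p$ of $n$.
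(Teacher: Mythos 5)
Your proof is correct and follows essentially the same route as the paper's: extract the witnessing ancestor from Definition~\ref{responsible}, transport it via item~4 of Definition~\ref{def:bisimulation}, preserve winning-set membership via Lemma~\ref{win-win lemma}, and identify the leaf labels via item~1. The only difference is cosmetic (variable names and the order in which the label equality is invoked), and your instantiation of the quantifiers in item~4 is the intended one.
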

\begin{proof}
By Definition~\ref{responsible}, the assumption that $a$ is responsible at node $n$ implies that $n$ has an ancestor $m$ such that 
\begin{equation}\label{31-jul-a}
m\in win_a(Alt\setminus \{\ell(n)\}).    
\end{equation}
Thus, by item~4 of Definition~\ref{def:bisimulation} and the assumption $nRn'$ of the lemma, there is an ancestor $m'$ of node $n'$ such that $mRm'$. Hence, 
$
m'\in win_a(Alt\setminus \{\ell(n)\})
$
by Lemma~\ref{win-win lemma} and statement~\eqref{31-jul-a}.
Note that $\ell(n)=\ell'(n')$ by item~1 of Definition~\ref{def:bisimulation}. Then, 
$
m'\in win_a(Alt\setminus \{\ell'(n')\})
$.
Therefore, agent $a$ is responsible at leaf $n'$ by Definition~\ref{responsible}.
\end{proof}

We conclude this section with two important observations about arbitrary totally bisimular mechanisms.

\begin{lemma}\label{diffusion bisimilar}
If one of two totally bisimular mechanisms allows diffusion of responsibility, then so does the other.
\end{lemma}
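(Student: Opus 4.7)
The plan is to transfer a diffusion-witnessing leaf from one mechanism to the other via the bisimulation $R$. Suppose $M=(Ag,Alt,T,act,\ell)$ admits diffusion, so by Definition~\ref{diffusion} there is a leaf $n$ of $T$ at which two distinct agents $a_1,a_2$ are both responsible. I would first locate a leaf $n'$ of $T'$ with $nRn'$, and then apply Lemma~\ref{resp iff resp} to each of $a_1$ and $a_2$ to conclude both are responsible at $n'$, which gives diffusion in $M'$.

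To produce such a leaf $n'$, totality (Definition~\ref{total}) yields some node $n_0'$ of $T'$ with $nRn_0'$. If $n_0'$ is not already a leaf, set $b=act'(n_0')$ and pick any child $m'$ of $n_0'$, so that $n_0'\stackrel{b}{\to}m'$ along a single edge. By item~3 of Definition~\ref{def:bisimulation}, there is a node $m$ of $T$ with $n\stackrel{b}{\to}m$ and $mRm'$; since $n$ is a leaf, Lemma~\ref{no go lemma} forces $m=n$, hence $nRm'$. Iterating this one-step descent and using finiteness of $T'$, the process terminates at a leaf $n'$ of $T'$ with $nRn'$, as required.

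The converse direction---deriving diffusion in $M$ from diffusion in $M'$---is symmetric: starting from a leaf $n'$ of $T'$ where two agents are responsible, I would descend in $T$ from a node $n_0$ with $n_0Rn'$ using item~2 of Definition~\ref{def:bisimulation} together with Lemma~\ref{no go lemma}, arriving at a leaf $n$ of $T$ with $nRn'$, and then invoke the symmetric form of Lemma~\ref{resp iff resp} to transport responsibility of each of the two agents back to $n$.

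The main obstacle is the leaf-refinement step: the bisimulation definition does not explicitly mandate that a node $R$-related to a leaf is itself a leaf, so one must manufacture such a leaf by a finite descent that repeatedly uses the appropriate transfer clause of Definition~\ref{def:bisimulation} in concert with Lemma~\ref{no go lemma} to keep $n$ pinned on the opposite side. Once a leaf-to-leaf $R$-pair is in hand, Lemma~\ref{resp iff resp} (applied separately to $a_1$ and $a_2$, which are distinct) immediately delivers the required diffusion witness in the other mechanism.
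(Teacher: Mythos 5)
Your proof is correct and follows the same route as the paper's: the paper's entire proof of this lemma is the single sentence that it follows from Lemma~\ref{resp iff resp} and Definition~\ref{total}. What you add is the leaf-refinement step, and that addition is genuinely needed: Lemma~\ref{resp iff resp} is stated only for pairs of \emph{leaf} nodes, while Definition~\ref{total} merely supplies some node $n_0'$ with $nRn_0'$, which nothing in Definition~\ref{def:bisimulation} forces to be a leaf. Your descent --- repeatedly stepping to a child $m'$ of the current $T'$-node, using item~3 of Definition~\ref{def:bisimulation} to pull the step back to $T$, and Lemma~\ref{no go lemma} to pin the $T$-side at the leaf $n$, so that $nRm'$ --- strictly decreases height in $T'$ and therefore terminates at a leaf of $T'$ related to $n$, closing this gap cleanly; the mirror-image descent for the other direction is equally sound. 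One caveat that applies to your argument and the paper's alike: transferring diffusion from $M'$ back to $M$ requires the ``symmetric form'' of Lemma~\ref{resp iff resp}, whose proof would need a converse of item~4 of Definition~\ref{def:bisimulation} (transferring ancestors from $T'$ to $T$) that the definition does not literally supply; you at least flag explicitly that a symmetric variant is being invoked, whereas the paper's one-line proof is silent on this point.
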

\begin{proof}
The statement of the lemma follows from Lemma~\ref{resp iff resp} and Definition~\ref{total}.    
\end{proof}



\begin{lemma}\label{elected bisimilar}
If one of two totally bisimular mechanisms is an elected dictatorship, then so is the other. 
\end{lemma}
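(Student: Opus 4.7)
The plan is to establish the elected-dictatorship property of one mechanism by transferring, via the bisimulation $R$, a dictator ancestor witnessed in the other. I fix a direction and show that if $(Ag, Alt, T, act, \ell)$ is an elected dictatorship then so is $(Ag, Alt, T', act', \ell')$; the converse follows by a symmetric argument.

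Fix an arbitrary leaf $n'$ of $T'$; the goal is to exhibit a dictator at an ancestor of $n'$. By Definition~\ref{total}, there is some $n$ in $T$ with $nRn'$, but $n$ need not itself be a leaf. To reduce to the leaf case, I would descend from $n$ through $T$ while preserving relatedness to $n'$: whenever the current node $v$ has a child $c$ with $act(v)=a$, the one-edge path $v \stackrel{a}{\to} c$ combined with item~2 of Definition~\ref{def:bisimulation} yields some node $c'$ with $n' \stackrel{a}{\to} c'$ and $cRc'$; because $n'$ is a leaf, Lemma~\ref{no go lemma} forces $c' = n'$ and hence $cRn'$. Finiteness of $T$ guarantees that iterating this descent terminates at a leaf $\ell$ of $T$ with $\ell R n'$.

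Because $T$ is an elected dictatorship and $\ell$ is a leaf of $T$, there is an agent $a$ that is a dictator at some ancestor $m$ of $\ell$. Applying item~4 of Definition~\ref{def:bisimulation} with $\ell R n'$ and the fact that $m$ is an ancestor of $\ell$, I obtain an ancestor $m'$ of $n'$ with $mRm'$. Lemma~\ref{dict iff dict} then transfers the dictator: $a$ is a dictator at $m'$, which is an ancestor of $n'$, as required. This verifies the elected-dictatorship property for every leaf of $T'$.

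The main obstacle is that elected dictatorship is only guaranteed at leaves, whereas totality may match a leaf of $T'$ to a non-leaf of $T$, so Lemma~\ref{31-jul-d} cannot be invoked directly. The descent step is what closes this gap, converting the originally related node $n$ into a related leaf $\ell$; after that, transferring the dictator via item~4 and Lemma~\ref{dict iff dict} is routine, and the converse direction runs identically with the roles of the two mechanisms and items~2,~3 of Definition~\ref{def:bisimulation} swapped.
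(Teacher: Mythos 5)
Your proof of the stated direction is correct, and it is in fact more careful than the paper's own one-line argument. The paper derives this lemma from Lemma~\ref{31-jul-d}, whose hypothesis is that \emph{every} node of $T$ has a dictator at an ancestor, whereas Definition~\ref{elected dictatorship} guarantees this only for \emph{leaves}; the two conditions are not equivalent (in the mechanism of Figure~\ref{elected figure}, no agent is a dictator at the root or at the Governor's height-2 node, so that internal node has no dictator at any ancestor even though the mechanism is an elected dictatorship). Your descent step --- pushing one child move of $T$ at a time through item~2 of Definition~\ref{def:bisimulation} and using Lemma~\ref{no go lemma} to keep the image pinned at the leaf $n'$ --- is exactly the bridge the paper's citation glosses over: it upgrades the node of $T$ supplied by totality to a \emph{leaf} of $T$ related to $n'$, after which item~4 and Lemma~\ref{dict iff dict} finish the argument just as in the paper.

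One caveat: the converse direction does not run ``identically with items 2 and 3 swapped.'' Item~4 of Definition~\ref{def:bisimulation} is stated only from $T$ to $T'$, so after you descend to a leaf $\ell'$ of $T'$ related to a given leaf $n$ of $T$ and locate a dictator at an ancestor $m'$ of $\ell'$, you have no clause that pulls $m'$ back to an ancestor of $n$; you would need the mirror image of item~4, which the definition does not supply. This asymmetry is inherited from the paper itself (Lemmas~\ref{31-jul-d} and~\ref{resp iff resp} are one-directional while Lemmas~\ref{diffusion bisimilar} and~\ref{elected bisimilar} claim both directions), so it is a defect of the definition rather than of your construction, but it should be flagged explicitly rather than dismissed as symmetry.
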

\begin{proof}
The statement of the lemma follows from Definition~\ref{elected dictatorship} and   Lemma~\ref{31-jul-d}.
\end{proof}

\subsection{Canonical Form}

\begin{definition}\label{canonical}
A mechanism is in {\bf\em canonical form} if there is no totally bisimular mechanism with a fewer number of nodes. 
\end{definition}
Note that neither of the two mechanisms depicted in Figure~\ref{bisimulation example} is in canonical form.

\begin{lemma}\label{parent-child lemma}
If a mechanism is in canonical form, then any node and its parent cannot be labeled with the same agent.  
\end{lemma}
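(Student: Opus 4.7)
My plan is to argue by contradiction. Assume the mechanism $(Ag, Alt, T, act, \ell)$ is in canonical form yet contains a non-leaf node $n$ whose parent $p$ satisfies $act(p) = act(n) = a$. The goal is to exhibit a totally bisimular mechanism $(Ag, Alt, T', act', \ell')$ whose tree $T'$ has strictly fewer nodes than $T$, contradicting Definition~\ref{canonical}.

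The natural construction is edge contraction. I would obtain $T'$ from $T$ by deleting $n$ and reattaching each former child of $n$ as a new child of $p$, while leaving every other node, edge, $act$-value, and leaf label unchanged; in particular $act'(p) = a$. The tree $T'$ then has one fewer node than $T$. As the candidate bisimulation I take the natural identification
\[
R \;=\; \{(x, x) : x \text{ is a node of } T \text{ different from } n\} \;\cup\; \{(n, p)\},
\]
which is obviously total.

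The verification proceeds by checking the four conditions of Definition~\ref{def:bisimulation} pair by pair. Condition~1 holds by construction on the identity pairs and is vacuous for $(n, p)$ since both $n$ and $p$ are non-leaves. Condition~4 is routine: ancestor relationships outside the contracted edge are preserved, and any $T$-ancestor $n$ of a descendant $d$ (with $d \mathrel{R} d$) is witnessed by the ancestor $p$ of $d$ in $T'$ via $n \mathrel{R} p$. Conditions~2 and~3 require translating $\stackrel{a}{\to}$-paths between the two trees, using that $act(p) = act(n) = a$ lets us collapse any two-step $T$-transition $p \to n \to d$ into a one-step $T'$-transition $p \to d$ carrying the same label.

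The main obstacle I foresee is Condition~3 for the pair $(n, p)$: given a path $p \stackrel{a}{\to} m'$ in $T'$, one must produce $m$ in $T$ with $n \stackrel{a}{\to} m$ and $m \mathrel{R} m'$. When $m' = p$, or when $m'$ lies in the subtree of a former child of $n$, this is immediate (take $m = n$ or $m = m'$, using $act(n) = a$). The delicate case is when $m'$ is one of $p$'s original children in $T$ other than $n$, or lies in such a sibling's subtree: in $T'$ that $m'$ is reachable from $p$ in one $a$-labeled step, yet in $T$ it is not a descendant of $n$ at all, so $n \not\stackrel{a}{\to} m'$. Settling this mismatch is the crux and will likely force enriching $R$ with extra pairs linking $n$ to the relevant subtrees, or applying the contraction together with further bisimulation-respecting simplifications. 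Once this is resolved, the total bisimulation between the original mechanism and the contracted one, combined with the strict inequality on the number of nodes, contradicts canonicity and completes the proof.
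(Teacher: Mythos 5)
Your construction is exactly the one the paper uses: collapse the equally-labelled parent--child pair into a single node, relate both original nodes to the merged node, and use the identity on everything else (the paper presents this diagrammatically and asserts, without spelling out the verification, that the resulting relation is a total bisimulation). So you have not taken a different route; the problem is that your proof stops at the one step that actually carries the weight, and you say so yourself.

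The case you flag is a genuine obstruction, not a loose end. Suppose the parent $p$ has a child $c\neq n$. In $T'$ the merged node reaches (the clone of) $c$ by a single $a$-labelled edge, so item~3 of Definition~\ref{def:bisimulation}, applied to the pair $(n,p)$, demands some $m$ with $n\stackrel{a}{\to}m$ and $m\mathrel{R}c$. Every such $m$ lies in the subtree rooted at $n$, so the relation would have to pair some node of that subtree with $c$; but if $c$ is a leaf whose alternative does not occur among the leaves below $n$, no such pair is admissible: a leaf $m$ below $n$ related to $c$ violates item~1, and $n$ itself related to $c$ violates item~1 after one application of item~2 (since $c$, being a leaf, can only $a$-reach itself). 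Concretely, take the root $p$ labelled $a$ with a leaf child labelled $Y$ and a second child $n$ labelled $a$ whose two leaf children are labelled $Z$ and $W$, with $Alt=\{Y,Z,W\}$: one can check that \emph{no} total bisimulation to any smaller mechanism exists, so the escape routes you sketch (enriching $R$, or further simplifications) cannot succeed in general. As written, your argument establishes the lemma only when everything $a$-reachable from $p$ outside the subtree of $n$ has a partner inside it (e.g., when $n$ is the only child of $p$); closing the gap requires either a genuinely different construction or a restriction of the statement, and the same criticism applies to the paper's own one-line justification.
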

\begin{proof}
Suppose that a parent node and its child are labeled with the same agent, see the diagram below (left).
\begin{center}
\scalebox{0.5}{\includegraphics{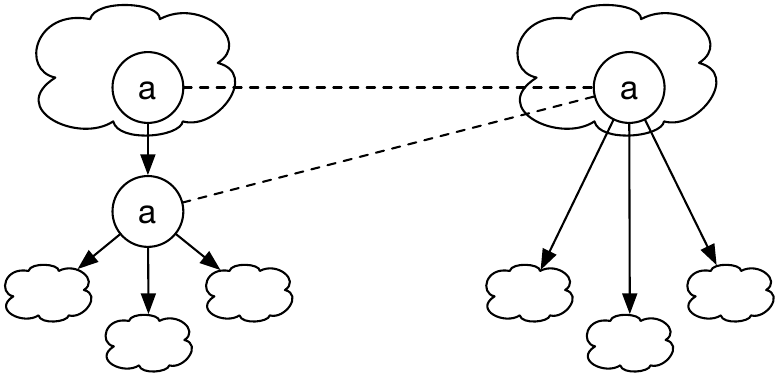}}
\vspace{-2mm}
\end{center}    
Consider a new mechanism (right) that ``collapses'' the parent and the child nodes into a single node. Let $R$ be the total bisimulation of the original and the new mechanism as shown by the dashed line on the diagram. We assume that all nodes not shown in the original mechanism are connected by the dashed lines to their clones in the new mechanism. Note that the new mechanism has a fewer number of nodes than the original mechanism. Therefore, by Definition~\ref{canonical}, the original mechanism is not in canonical form.
\end{proof}

\begin{lemma}\label{no twins lemma}
A node of a mechanism in canonical form cannot have two leaf children labeled with the same alternative.    
\end{lemma}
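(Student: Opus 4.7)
The plan is to mirror the approach used in Lemma~\ref{parent-child lemma}: assume for contradiction that some node $n$ has two leaf children $c_1,c_2$ with $\ell(c_1)=\ell(c_2)=Y$, and then exhibit a strictly smaller totally bisimular mechanism, contradicting Definition~\ref{canonical}.

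First I would construct the target mechanism $(Ag,Alt,T',act',\ell')$ by deleting the leaf $c_2$ from $T$ and leaving everything else untouched. The resulting tree remains finite and rooted; the node $n$ retains $c_1$ as a child, so it stays a non-leaf with $act'(n)=act(n)$; and $\ell'$ remains surjective because the alternative $Y$ is still witnessed by the surviving leaf $c_1$. Hence $T'$ is a legitimate mechanism with exactly one fewer node than $T$. I would then define the relation $R$ as the identity/clone map on all nodes other than $c_2$, together with the extra pair relating $c_2$ to (the clone of) $c_1$.

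The heart of the argument would be verifying items~1--4 of Definition~\ref{def:bisimulation}. Totality and item~1 are immediate, since both $c_1$ and $c_2$ carry the label $Y$. For items~2 and~3, Lemma~\ref{no go lemma} collapses every case involving a leaf to $n=m$, so the only non-trivial situation is a one-step move from $n$ to $c_2$ labeled $a=act(n)$; this is matched in $T'$ by the corresponding one-step move from the clone of $n$ to the surviving copy of $c_1$, which carries the same label $a$. Item~4 is routine, because deleting a single leaf preserves the ancestor relation for every remaining node.

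The main obstacle is not any deep argument but rather organizing the case analysis in items~2 and~3 so that the single ``modified'' location in the tree is cleanly isolated and the rest is handled by the identity/clone portion of $R$; once the bisimulation is established, the existence of a strictly smaller totally bisimular mechanism contradicts Definition~\ref{canonical}.
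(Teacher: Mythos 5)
Your proposal is correct and follows essentially the same route as the paper: the paper's proof also constructs a strictly smaller totally bisimular mechanism by merging the two identically-labeled leaf children into one (relating both original leaves to the surviving copy) and invokes Definition~\ref{canonical} for the contradiction. Your explicit verification of items~1--4, including the surjectivity of $\ell'$ and the use of Lemma~\ref{no go lemma} to collapse the leaf cases, fills in exactly the details the paper leaves to its diagram.
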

\begin{proof}
The proof of the lemma is again similar to the proof of Lemma~\ref{parent-child lemma}, using the diagram below.
\begin{center}
\scalebox{0.5}{\includegraphics{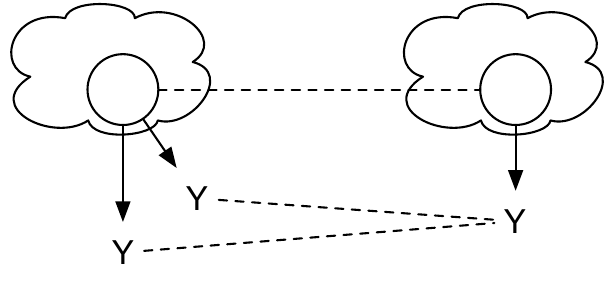}}
\end{center} 
\vspace{-4mm}
\end{proof}

\begin{lemma}\label{two children lemma}
If a mechanism is in canonical form, then each node of height 1 has at least two leaf children labeled with different alternatives.  
\end{lemma}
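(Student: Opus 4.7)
The plan is to proceed by contradiction, in the style of the proofs of Lemmas~\ref{parent-child lemma} and~\ref{no twins lemma}: if a height-$1$ node fails the stated property, I will construct a strictly smaller totally bisimular mechanism and contradict canonical form. First, I note that a node of height $1$ can only have leaf children, since any non-leaf child would produce a downward path of length at least $2$. Combined with Lemma~\ref{no twins lemma}, which already forbids two leaf children sharing an alternative, it therefore suffices to show that no height-$1$ node has exactly one child.

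Suppose, toward a contradiction, that some height-$1$ node $n$ has a unique child $c$ (necessarily a leaf). If $n$ is the root, then the entire tree consists of only $n$ and $c$, producing a single leaf and contradicting Lemma~\ref{no singleton mechanisms}. Otherwise $n$ has a parent $p$, and I form a new mechanism on a tree $T'$ by contracting the edge from $p$ to $n$: the internal node $n$ is deleted, $c$ is re-attached directly as a child of $p$, and all other labels are preserved. This $T'$ is a valid mechanism with exactly one fewer node than $T$. I then define a relation $R$ that pairs each node of $T$ other than $n$ with its clone in $T'$, and additionally pairs $n$ with (the clone of) $c$; totality is immediate.

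The first and fourth clauses of Definition~\ref{def:bisimulation} are straightforward: the leaf-label clause is vacuous at the pair $(n,c)$ because $n$ is not a leaf, and the ancestor clause is pure bookkeeping, using that the only ancestor ``lost'' by the contraction is $n$ itself, whose image $c$ does have $p'$ and its ancestors above it in $T'$. The main obstacle, and the place where canonical form is essential, lies in the two $\stackrel{a}{\to}$ clauses. The delicate case is an $a$-path in $T$ of the form $u \stackrel{a}{\to} c$ that traverses both $p$ and $n$: such a path would force $act(p)=act(n)=a$, which Lemma~\ref{parent-child lemma} rules out in canonical form. Once this case is excluded, every $a$-path in $T$ either avoids $n$ entirely and transports unchanged to $T'$, or ends at $n$ and matches the corresponding $a$-path in $T'$ ending at $c$ through the new edge $p\to c$ labeled $act(p)$; the symmetric direction uses that $c$ is a leaf in $T'$, so by Lemma~\ref{no go lemma} the only outgoing $a$-path from $c$ is trivial and matches the trivial path at $n$ in $T$. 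This yields the desired bisimulation and contradicts canonical form, completing the proof.
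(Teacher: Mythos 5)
Your proof is correct and follows essentially the same route as the paper's: reduce via Lemma~\ref{no twins lemma} to the case of a height-1 node with a single leaf child, then contract that node to obtain a strictly smaller totally bisimular mechanism, contradicting canonical form. The extra care you take with the root case (via Lemma~\ref{no singleton mechanisms}) and with excluding $a$-paths through both the parent and the contracted node (via Lemma~\ref{parent-child lemma}) just makes explicit what the paper leaves to the reader.
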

\begin{proof}
Consider any node of height 1 that does not have at least two children labeled with the same alternative. Thus, by Lemma~\ref{no twins lemma}, this node has only one leaf child. 
\begin{center}
\scalebox{0.5}{\includegraphics{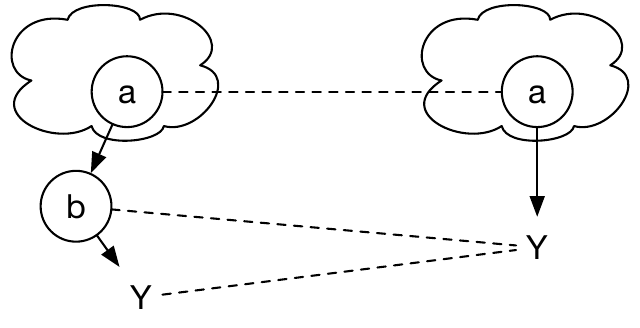}}
\vspace{-3mm}
\end{center}
The rest of the proof is similar to the proof of Lemma~\ref{parent-child lemma} using the diagram above.
\end{proof}

The proofs of the next three lemmas are in the appendix.

\begin{lemma}\label{no-single-child lemma}
A node of a mechanism in canonical form cannot have exactly one child.  
\end{lemma}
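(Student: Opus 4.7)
I plan to follow the pattern established in Lemmas~\ref{parent-child lemma}, \ref{no twins lemma}, and \ref{two children lemma}: assuming a canonical mechanism contains a node $\nu$ whose only child is $\gamma$, I would exhibit a strictly smaller mechanism $T'$ that is totally bisimular to the given mechanism $T$, thereby contradicting Definition~\ref{canonical}.

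The construction of $T'$ is to delete $\nu$ from $T$ and to attach $\gamma$ (together with its subtree) directly under the parent of $\nu$; if $\nu$ is the root of $T$, then $\gamma$ becomes the root of $T'$. Every other node of $T$ is cloned into $T'$ with its $act$ or $\ell$ label preserved. The result is a legal mechanism in the sense of Definition~\ref{game}: no leaf is removed, so $\ell'$ remains surjective, and $|Alt|\ge 2$ is inherited from $T$. The only edge case that could spoil this, namely $\nu$ being the root with $\gamma$ its unique leaf child, cannot occur because Lemma~\ref{no singleton mechanisms} guarantees that $T$ has at least two leaves.

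I would then define the relation $R$ by pairing each node of $T$ other than $\nu$ with its clone in $T'$ and additionally setting $\nu\, R\, \gamma'$, where $\gamma'$ is the clone of $\gamma$. Item~1 of Definition~\ref{def:bisimulation} holds because cloning preserves leaf labels. Item~4 holds because, for any node $v$ of $T$ and its partner $v'$ in $T'$, every ancestor of $v$ that is not $\nu$ has its clone as an ancestor of $v'$, while the ancestor $\nu$ (should it appear) is matched to $\gamma'$, which lies on the ancestor chain of $v'$ whenever $v$ is a descendant of $\nu$. For items~2 and~3, I would trace an $\stackrel{a}{\to}$ path through the deletion site: a path in $T$ that passes through $\nu$ corresponds to a path in $T'$ that uses the new direct edge from the clone of the parent of $\nu$ to $\gamma'$, with every edge still labeled $a$ because edge labels are determined solely by the parent node's agent; conversely, a path in $T'$ that traverses this new edge lifts to a path in $T$ by reinserting $\nu$, possibly truncating at $\nu$ when the continuation through $\gamma$ is not $a$-labeled. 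Totality is immediate because every clone has a pre-image and $\nu$ is related to $\gamma'$.

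The main obstacle is the bookkeeping for items~2 and~3 of the bisimulation, distinguishing whether the $\stackrel{a}{\to}$ path starts at $\nu$, ends at $\nu$, or merely passes over the spot formerly occupied by $\nu$; each subcase is short, but collectively they demand care. Once completed, $T'$ is a total bisimular mechanism with one fewer node than $T$, contradicting Definition~\ref{canonical}.
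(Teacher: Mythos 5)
Your overall strategy coincides with the paper's: assume a node $\nu$ with a unique child $\gamma$, excise $\nu$ to obtain a strictly smaller mechanism, exhibit a total bisimulation, and contradict Definition~\ref{canonical}. The one structural difference is at the start: the paper first applies Lemma~\ref{two children lemma} to conclude that a node with exactly one child cannot have height~1, so that $\gamma$ is never a leaf, and then only the internal-child picture needs to be drawn. You instead permit $\gamma$ to be a leaf and dispose of the only troublesome sub-case (root with a single leaf child) via Lemma~\ref{no singleton mechanisms}; that is a legitimate alternative route for that part.

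There is, however, a concrete gap in your verification of item~3 of Definition~\ref{def:bisimulation} in the case where $\gamma$ is internal, and your proposed device (``truncating at $\nu$'') does not repair it. By Lemma~\ref{parent-child lemma}, $act(\nu)\neq act(\gamma)$, so by Lemma~\ref{no go lemma} the only node $m$ with $\nu\stackrel{act(\gamma)}{\to}m$ is $\nu$ itself. Yet in $T'$ one has $\gamma'\stackrel{act(\gamma)}{\to}d'$ for every child $d'$ of $\gamma'$, and item~3 applied to the pair $\nu\,R\,\gamma'$ demands some $m$ with $\nu\stackrel{act(\gamma)}{\to}m$ and $m\,R\,d'$ --- that is, it forces $\nu\,R\,d'$, which your relation does not provide. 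Truncation is of no use here because the matched node $m$ must be related to the \emph{endpoint} $m'$ of the $T'$-path, not merely to a node passed along the way. The same difficulty recurs for the pair (parent of $\nu$, its clone) whenever $act(\gamma)$ coincides with the parent's agent, so that an $\stackrel{a}{\to}$-path in $T'$ runs through the new edge and continues below $\gamma'$. To be fair, the paper's own argument is only a one-line reduction to the diagram of Lemma~\ref{parent-child lemma} and does not spell these subcases out either; but since you explicitly identify items~2--3 as the crux and claim a specific resolution, you should either enlarge the relation and re-check all clauses, or explain why the clauses of Definition~\ref{def:bisimulation} hold as stated --- as written, the matching condition fails for the pairs indicated.
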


\begin{lemma}\label{no leaf-uncles lemma}
If a diffusion-free mechanism is in canonical form, then any child of a node of height 2 must have height~1.  
\end{lemma}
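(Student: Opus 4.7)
The plan is to argue by contradiction. Suppose some node $n$ of height $2$ has a child $m$ that is \emph{not} of height $1$. Since $n$ has height $2$, every child of $n$ has height at most $1$, so $m$ must in fact be a leaf. At the same time, because the height of $n$ is exactly $2$, $n$ must also have at least one child $p$ of height precisely $1$. Put $a=act(n)$ and $b=act(p)$; Lemma~\ref{parent-child lemma} then guarantees $a\ne b$, which is what will allow the responsibilities of the two agents to actually collide at a single leaf.

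To engineer the collision, I would apply Lemma~\ref{two children lemma} to the height-$1$ node $p$, obtaining two leaf children $q_1,q_2$ of $p$ whose labels $Y_1=\ell(q_1)$ and $Y_2=\ell(q_2)$ are distinct. Let $Z=\ell(m)$. Since $Y_1\ne Y_2$, at least one of them differs from $Z$; relabel if necessary so that $Y_1\ne Z$. Because $m$ is a leaf child of $n$ labeled $Z$ and $act(n)=a$, clause~2 of Definition~\ref{df:winning set} gives $n\in win_a(\{Z\})\subseteq win_a(Alt\setminus\{Y_1\})$, so by Definition~\ref{responsible} agent $a$ is counterfactually responsible at the leaf $q_1$ via its ancestor $n$. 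The symmetric argument applied to $p$ through its leaf child $q_2$ yields $p\in win_b(\{Y_2\})\subseteq win_b(Alt\setminus\{Y_1\})$, making $b$ responsible at $q_1$ via the ancestor $p$. Two distinct agents are then responsible at the same leaf, contradicting the diffusion-free hypothesis.

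The only real subtlety is the mini case analysis used to pick $Y_1\ne Z$, and it is precisely Lemma~\ref{two children lemma} — which forces at least two distinct alternatives to appear among the children of $p$ — that makes this choice possible. Everything else is a direct, one-step application of the winning-set definition together with the canonical-form lemmas already proved, so I do not anticipate any additional obstacle.
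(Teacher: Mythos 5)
Your proof is correct and follows essentially the same route as the paper's: assume the child $m$ is a leaf, take a sibling $p$ of height $1$, use Lemma~\ref{two children lemma} to find a leaf child of $p$ whose label differs from $\ell(m)$, and show that both $act(n)$ and $act(p)$ are responsible there, contradicting diffusion-freeness together with Lemma~\ref{parent-child lemma}. The only difference is cosmetic (you invoke Lemma~\ref{parent-child lemma} up front rather than at the end), so nothing further is needed.
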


By $ChildAlt(n)$ we denote the set of alternatives used to label the leaf children of a node $n$. For example, 
$ChildAlt(n_2)=\{\text{Yes}\}$ and 
$ChildAlt(n_4)=\{\text{Yes},\text{No}\}$ for the mechanism depicted in Figure~\ref{constitution figure}. 

\begin{lemma}\label{childAlt lemma}
If a diffusion-free mechanism is in canonical form, and $n_1$ and $n_2$ are children of a node of height 2, then $ChildAlt(n_1)=ChildAlt(n_2)$.\end{lemma}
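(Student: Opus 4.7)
The plan is to proceed by contradiction: suppose, without loss of generality, that some alternative $Y\in ChildAlt(n_1)\setminus ChildAlt(n_2)$ exists, and then exhibit a single leaf at which two distinct agents are counterfactually responsible, violating diffusion-freeness.

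Set up notation as follows. Let $p$ denote the height-$2$ parent of $n_1$ and $n_2$, and write $a_1=act(n_1)$, $a_2=act(n_2)$, and $b=act(p)$. Lemma~\ref{parent-child lemma} gives $b\neq a_1$ and $b\neq a_2$. Lemma~\ref{no leaf-uncles lemma} forces both $n_1$ and $n_2$ to have height $1$, so all of their children are leaves. Lemma~\ref{two children lemma} guarantees that $n_1$ has at least two leaf children carrying distinct alternatives, and similarly for $n_2$. Pick the leaf child $L$ of $n_1$ with $\ell(L)=Y$; this will be the target leaf.

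The first subgoal is to show that $a_1$ is responsible at $L$. By Lemma~\ref{two children lemma}, at least one child of $n_1$ is labeled by some alternative different from $Y$; item~1 of Definition~\ref{df:winning set} places that sibling leaf into $win_{a_1}(Alt\setminus\{Y\})$, and then item~2 (applicable because $act(n_1)=a_1$) promotes $n_1$ into the same set. Since $n_1$ is an ancestor of $L$, Definition~\ref{responsible} yields responsibility of $a_1$ at $L$.

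The key step is showing that $b$ is also responsible at $L$, with $n_2$ playing the role of the winning witness. Because $Y\notin ChildAlt(n_2)$, every leaf child of $n_2$ carries an alternative in $Alt\setminus\{Y\}$ and therefore lies in $win_b(Alt\setminus\{Y\})$ by item~1 of Definition~\ref{df:winning set}. Since $act(n_2)=a_2\neq b$, item~3 of that definition then places $n_2$ itself into $win_b(Alt\setminus\{Y\})$, and since $act(p)=b$, item~2 finally places $p$ into the same set. As $p$ is an ancestor of $L$, Definition~\ref{responsible} makes $b$ responsible at $L$. Combined with the first subgoal and the inequality $a_1\neq b$, this produces diffusion at $L$, contradicting the hypothesis and forcing $ChildAlt(n_1)\subseteq ChildAlt(n_2)$; symmetry then yields the claimed equality. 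The only real obstacle is bookkeeping the canonical-form structure around $p$; the single genuinely new observation is that the \emph{absence} of $Y$ from the sibling subtree $n_2$ is exactly what hands the grandparent's agent $b$ a winning strategy against $Y$, so the very gap that the hypothesis conjures up is what creates the second responsible agent.
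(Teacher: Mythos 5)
Your proposal is correct and follows essentially the same route as the paper: both arguments pick the leaf child of $n_1$ labeled $Y$ and show that $act(n_1)$ is responsible there via a differently-labeled sibling (Lemma~\ref{two children lemma}) while the height-2 parent's agent is responsible via the move to $n_2$, whose subtree avoids $Y$; the contradiction with Lemma~\ref{parent-child lemma} then closes the argument. Your write-up merely makes the $win$-set bookkeeping from Definition~\ref{df:winning set} explicit where the paper states the strategies informally.
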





\begin{lemma}\label{two non-leaf children lemma}
If a diffusion-free mechanism is in canonical form, then any node of height 2 must have at least two children labeled with different agents.  
\end{lemma}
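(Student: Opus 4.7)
The plan is to argue by contradiction using the same diagram-collapse technique as in Lemma~\ref{parent-child lemma}. Suppose $n$ is a height-$2$ node whose children are all labeled with a single agent $b$. Lemma~\ref{no-single-child lemma} gives that $n$ has at least two such children; pick any two, $c_1$ and $c_2$. Lemma~\ref{no leaf-uncles lemma} says $c_1$ and $c_2$ both have height $1$; Lemma~\ref{childAlt lemma} says $ChildAlt(c_1)=ChildAlt(c_2)=:S$; and Lemmas~\ref{two children lemma} and~\ref{no twins lemma} together say that each $c_i$ has exactly one leaf child per alternative in $S$, with $|S|\ge 2$. Consequently, the subtrees rooted at $c_1$ and $c_2$ are isomorphic as labeled trees.

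Next I would form a smaller mechanism by deleting the entire subtree rooted at $c_2$ and defining a relation $R$ that pairs $c_2$ with the surviving copy of $c_1$, pairs each leaf child of $c_2$ labeled $Y\in S$ with the unique leaf child of $c_1$ labeled $Y$, and pairs every other node with its clone. Verifying the four conditions of Definition~\ref{def:bisimulation} is routine: clause~1 holds by the label-preserving matching; the forward and backward $\stackrel{b}{\to}$-simulation clauses follow because $c_1$ and $c_2$ share the agent label $b$ and the same leaf-label set $S$; and the ancestor clause holds because the only ancestors of $c_2$ or its leaves affected by the collapse are $c_2$ itself, $n$, and the nodes above $n$, all of which except $c_2$ survive unchanged. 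The bisimulation is clearly total, and the new mechanism has strictly fewer nodes, so the original is not in canonical form, contradicting the hypothesis.

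The step I expect to require the most care is the backward simulation clause under the pairing $c_2\, R\, c_1$: every $\stackrel{b}{\to}$-edge from $c_1$ to a leaf in the new mechanism must be matched by a genuine edge from $c_2$ to a correspondingly labeled leaf in the original. This is precisely the point at which Lemmas~\ref{no twins lemma} and~\ref{childAlt lemma} combine, since together they guarantee that $c_2$ has a unique leaf child for every alternative in $S = ChildAlt(c_1)$.
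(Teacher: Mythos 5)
Your proposal is correct and takes essentially the same approach as the paper: both reduce to two same-agent height-1 siblings via Lemmas~\ref{no-single-child lemma}, \ref{no leaf-uncles lemma}, \ref{childAlt lemma}, and~\ref{no twins lemma}, and then merge the two isomorphic subtrees into one to obtain a smaller totally bisimular mechanism, contradicting canonical form. Your explicit verification of the four bisimulation clauses just spells out what the paper delegates to its diagram.
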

\begin{proof}
Consider any node $n$ of height 2. By the definition of the height, node $n$ must have at least one child. Thus, by Lemma~\ref{no-single-child lemma}, node $n$ must have at least two children, $n_1$ and $n_2$. By Lemma~\ref{no leaf-uncles lemma}, $height(n_1)=height(n_2)=1$.
It suffices to show $act(n_1)\neq act(n_2)$. Suppose $act(n_1)= act(n_2)=a$ for some agent $a$.

Lemma~\ref{childAlt lemma} implies that $ChildAlt(n_1)=ChildAlt(n_2)$. By Lemma~\ref{no twins lemma}, all children of node $n_1$ are labeled with different alternatives. The same is true about node $n_2$.
Hence, nodes $n_1$ 
and $n_2$ have exactly the same number of children labeled with the same set of alternatives (one child per alternative).

Consider a new mechanism that ``combines'' nodes $n_1$ and $n_2$ as well as their identically labeled children.  Let $R$ be the total bisimulation of the original and the new mechanism as shown by the dashed line on the diagram below:
\begin{center}
\scalebox{0.5}{\includegraphics{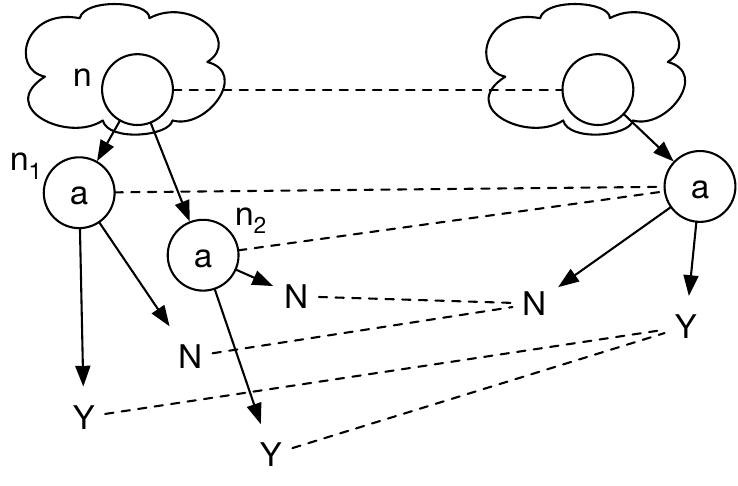}}
\vspace{-2mm}
\end{center}
We assume that all nodes not shown in the original mechanism are connected by the dashed lines to their clones in the new mechanism. Note that the new mechanism has fewer nodes than the original mechanism. Thus, by Definition~\ref{canonical}, the original mechanism is not in canonical form.
\end{proof}


\begin{lemma}\label{Jia's lemma}
In a diffusion-free mechanism  in canonical
form,
if $n$ is a parent of a non-leaf node $n_1$, then the tree rooted at node $n_1$ contains a node $m$ such that $height(m)=1$ and $act(m)\neq act(n)$.
\end{lemma}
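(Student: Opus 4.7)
Let $a=act(n)$. The plan is a short case analysis on $height(n_1)$, leaning on the structural lemmas already established for canonical-form diffusion-free mechanisms.

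If $height(n_1)=1$, I would take $m=n_1$ directly. Since $n$ is the parent of $n_1$, Lemma~\ref{parent-child lemma} guarantees $act(n_1)\neq act(n)=a$, so $m=n_1$ satisfies the required conditions (it lies in the subtree rooted at $n_1$ trivially, has height $1$, and carries a label different from $a$).

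If $height(n_1)\geq 2$, I would first locate a node $n'$ of height exactly $2$ inside the subtree rooted at $n_1$. Such a node exists by a standard longest-path argument: pick a downward path $n_1=v_0,v_1,\dots,v_k$ of maximum length $k\geq 2$ ending at a leaf, and let $n'=v_{k-2}$; maximality of the chosen path forces $n'$ to have height exactly $2$, because any child of $v_{k-2}$ of height $\geq 2$ would extend into a strictly longer path from $n_1$. Next, Lemma~\ref{no leaf-uncles lemma} tells me every child of $n'$ has height $1$, and Lemma~\ref{two non-leaf children lemma} says $n'$ has at least two children carrying different agent labels. Consequently, at least one of these two height-$1$ children, which I take to be $m$, must satisfy $act(m)\neq a$ (if one is labeled $a$, the other is not).

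I do not anticipate a serious obstacle: once the case split is in place, the conclusion falls out of Lemmas~\ref{parent-child lemma}, \ref{no leaf-uncles lemma}, and \ref{two non-leaf children lemma}. The only minor point to verify is the existence of a height-$2$ node inside the subtree of $n_1$ in the second case, handled by the maximal-path argument above.
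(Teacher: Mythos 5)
Your proposal is correct and follows essentially the same route as the paper's proof: the $height(n_1)=1$ case is handled by Lemma~\ref{parent-child lemma}, and the $height(n_1)\ge 2$ case locates a height-$2$ node in the subtree (the paper takes the grandparent of a deepest leaf, which is the same as your maximal-path argument) and then applies Lemma~\ref{two non-leaf children lemma} to obtain two differently-labeled height-$1$ children, one of which must differ from $act(n)$.
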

\begin{proof}
If $height(n_1)=1$, then let $m$ be the node $n_1$. Note that $act(m)=act(n_1)\neq act(n)$ by Lemma~\ref{parent-child lemma} and the assumption that the mechanism is in canonical form. In what follows, we assume that 
\begin{equation}\label{7-aug-e}
height(n_1)\ge 2.   
\end{equation}
Let $n_2$ be the deepest leaf node in the subtree rooted at node $n_1$ as shown below:

\begin{center}
\vspace{-2mm}
\scalebox{0.5}{\includegraphics{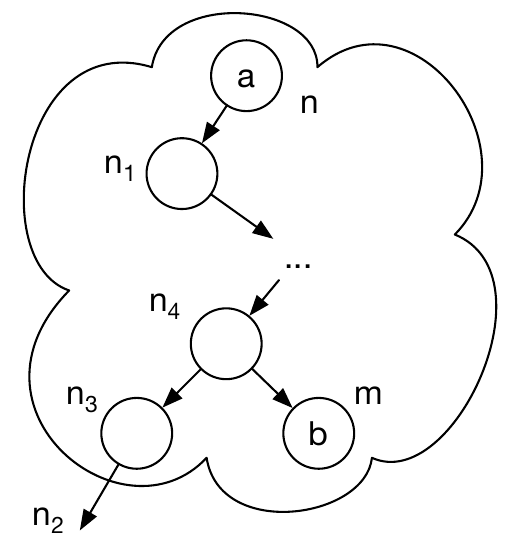}}
\vspace{-2mm}
\end{center}
Let $n_3$ be the parent of leaf $n_2$ and $n_4$ be the parent of node $n_3$. Note that $height(n_4)=2$ because $n_2$ is the deepest leaf node. Also, observe that node $n_4$ belongs to the subtree of node $n_1$ by the inequality~\eqref{7-aug-e}. By Lemma~\ref{two non-leaf children lemma}, node $n_4$ must have a child node $m$ such that $act(m)\neq act(n)$.
\end{proof}

\begin{lemma}\label{leaf's parent height 1}
If a diffusion-free mechanism is in canonical form, then the parent of each leaf node has height 1.    
\end{lemma}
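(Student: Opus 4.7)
The plan is to argue by contradiction. Let $n$ be a leaf and let $p$ be its parent, so $height(p)\ge 1$ automatically. I would assume $height(p)\ge 2$ and exhibit a leaf at which two distinct agents are both counterfactually responsible, contradicting diffusion-freeness.

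First, $height(p)\ge 2$ means that some downward path from $p$ has length at least two, so $p$ has a non-leaf child $n_1$ (distinct from the leaf $n$). Applying Lemma~\ref{Jia's lemma} with the pair $(p,n_1)$ produces a node $m$ in the subtree rooted at $n_1$ with $height(m)=1$ and $act(m)\ne act(p)$. Set $a:=act(p)$, $b:=act(m)$, and $Y:=\ell(n)$; by construction $a\ne b$.

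Next I would exploit Lemma~\ref{two children lemma}: since $m$ has height $1$, it has at least two leaf children carrying distinct alternatives, and therefore one can pick a leaf child $n'$ of $m$ with $\ell(n')\ne Y$. Since $n'$ lies in the subtree of $n_1$, a child of $p$, both $p$ and $m$ are ancestors of $n'$. To witness diffusion at $n'$, I would verify: (i) because $n$ is a leaf child of $p$ with $\ell(n)=Y$ and $act(p)=a$, Definition~\ref{df:winning set} gives $p\in win_a(\{Y\})\subseteq win_a(Alt\setminus\{\ell(n')\})$, so $a$ is responsible at $n'$; and (ii) because $m$ has some other leaf child whose label differs from $\ell(n')$ and $act(m)=b$, the same definition gives $m\in win_b(Alt\setminus\{\ell(n')\})$, so $b$ is responsible at $n'$. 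These two responsibilities at the same leaf contradict the assumption that the mechanism is diffusion-free.

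The main obstacle is essentially packaged into Lemma~\ref{Jia's lemma}: once a height-$1$ descendant of $p$ labeled by an agent different from $act(p)$ is guaranteed, the witness leaf almost writes itself. The only delicate point is to pick $n'$ with $\ell(n')\ne Y$, which is why I rely on Lemma~\ref{two children lemma} to guarantee that $m$ offers at least two distinct alternatives among its leaf children.
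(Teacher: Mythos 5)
Your proposal is correct and follows essentially the same route as the paper's own proof: contradiction via Lemma~\ref{Jia's lemma} to obtain a height-$1$ descendant $m$ with $act(m)\neq act(p)$, then Lemma~\ref{two children lemma} to select a leaf child of $m$ whose label differs from $\ell(n)$, at which both $act(p)$ and $act(m)$ are responsible. The only cosmetic difference is your explicit use of the inclusion $win_a(\{Y\})\subseteq win_a(Alt\setminus\{\ell(n')\})$, which the paper leaves implicit.
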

\begin{proof}
Consider parent $n$ of a leaf node $n_1$. Towards the contradiction, suppose that $height(n)>1$. Then, node $n$ must have a child non-leaf node $n_2$. By Lemma~\ref{Jia's lemma}, the tree rooted at node $n_2$ contains a node $n_3$ such that $height(n_3)=1$ and \begin{equation}\label{2-aug-a}
act(n_3)\neq act(n).    
\end{equation}
By Lemma~\ref{two children lemma}, node $n_3$ has two leaf children labeled with different alternatives. Let $n_4$ be one of these leaf children such that $\ell(n_1)\neq \ell(n_4)$ and let $n_5$ be the other leaf child such that $\ell(n_4)\neq \ell(n_5)$, as shown:

\begin{center}
\vspace{0mm}
\scalebox{0.5}{\includegraphics{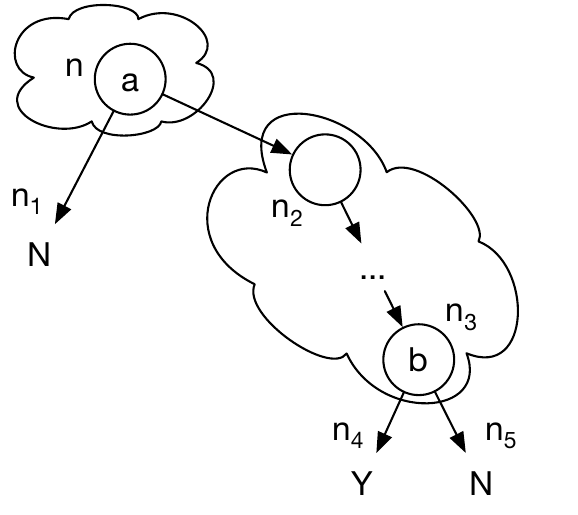}}
\vspace{-2mm}
\end{center}

First, note that agent $act(n)$ is responsible at node $n_4$. Indeed, because $\ell(n_1)\neq \ell(n_4)$, the agent had a strategy to prevent alternative $\ell(n_4)$ by transitioning the decision-making process from node $n$ to $n_1$.

Second, agent $act(n_3)$ is also responsible at node $n_4$. Indeed, because $\ell(n_4)\neq \ell(n_5)$, the agent had a strategy to prevent alternative $\ell(n_4)$ by transitioning the decision-making process from node $n_3$ to $n_5$.

Thus, agents $act(n)$ and $act(n_3)$ are both responsible at node $n_4$. Hence, $act(n)=act(n_3)$ by the assumption of the lemma that the mechanism is diffusion-free and Definition~\ref{diffusion}, which contradicts inequality~\eqref{2-aug-a}.
\end{proof}

By $TreeAlt(n)$ we refer to the set of all alternatives used to label the leaf nodes in the subtree of the mechanism rooted at node $n$. For the mechanism depicted in Figure~\ref{constitution figure}, we have
$TreeAlt(n_2)=\{\text{Yes},\text{No}\}$ and
$ChildAlt(n_2)=\{\text{Yes}\}$.

\begin{lemma}\label{child=tree lemma}
$ChildAlt(m)=TreeAlt(n)$, for any node $n$ of a diffusion-free mechanism in canonical form and any node $m$ of height~1 in the subtree rooted at $n$.
\end{lemma}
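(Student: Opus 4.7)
The plan is to induct on $height(n)$. When $height(n)\le 1$, either there is no height-1 node in the subtree of $n$ or $n$ itself is the unique one, in which case $ChildAlt(n)=TreeAlt(n)$ trivially. When $height(n)=2$, Lemma~\ref{no leaf-uncles lemma} forces every child of $n$ to have height exactly $1$, so every height-1 node in the subtree of $n$ is a child of $n$; Lemma~\ref{childAlt lemma} together with $TreeAlt(n)=\bigcup_c ChildAlt(c)$ over children $c$ of $n$ then closes the base case.

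For the inductive step, assume $height(n)\ge 3$. Lemma~\ref{leaf's parent height 1} implies that $n$ has no leaf children, so every child $c$ of $n$ satisfies $height(c)\ge 1$, and the inductive hypothesis applied to $c$ gives $ChildAlt(m)=TreeAlt(c)$ for every height-1 node $m$ in the subtree of $c$. The lemma for $n$ therefore reduces to showing that $TreeAlt(c)$ is the same for every child $c$ of $n$: once this is known, an arbitrary height-1 $m$ in the subtree of $n$ lies in some child-subtree $c$, yielding $ChildAlt(m)=TreeAlt(c)=TreeAlt(n)$.

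To prove this sibling-equality I would argue by contradiction. Suppose $c_1, c_2$ are children of $n$ and pick $Y\in TreeAlt(c_1)\setminus TreeAlt(c_2)$. By Lemma~\ref{Jia's lemma} applied to the pair $(n,c_1)$, the subtree rooted at $c_1$ contains a height-1 node $m$ with $act(m)\neq act(n)$. The inductive hypothesis applied to $c_1$ gives $ChildAlt(m)=TreeAlt(c_1)\ni Y$, so $m$ has a leaf child $\ell^*$ labeled $Y$. I would then exhibit two distinct responsible agents at $\ell^*$: by Lemma~\ref{two children lemma}, $m$ has another leaf child carrying a different alternative, so $m\in win_{act(m)}(Alt\setminus\{Y\})$, making $act(m)$ responsible; and because every leaf in the subtree of $c_2$ has a label in $Alt\setminus\{Y\}$, a routine induction on height shows $c_2\in win_{act(n)}(Alt\setminus\{Y\})$, whence $n\in win_{act(n)}(Alt\setminus\{Y\})$, making $act(n)$ responsible at $\ell^*$ as well. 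Since $act(m)\neq act(n)$, this contradicts the diffusion-free assumption.

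The main obstacle is precisely this sibling-equality step: the inductive hypothesis on its own only coordinates height-1 nodes that already sit within a common child-subtree, so bridging two different child-subtrees requires combining Lemma~\ref{Jia's lemma} (to guarantee a height-1 node in $c_1$'s subtree whose agent differs from $act(n)$) with the asymmetry $Y\notin TreeAlt(c_2)$ (so that routing through $c_2$ gives $act(n)$ a safe strategy avoiding $Y$) to manufacture the diffusion that collides with the diffusion-free hypothesis.
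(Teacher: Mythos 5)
Your proposal is correct and follows essentially the same route as the paper's proof: induction on $height(n)$, reduction to the equality of the alternative-sets of sibling child-subtrees, and a contradiction obtained by applying Lemma~\ref{Jia's lemma} to the child-subtree containing the missing alternative $Y$, locating a leaf labeled $Y$ under a height-1 node with a different acting agent, and exhibiting two responsible agents there. The only differences are organizational (you split off $height(n)=2$ as a separate base case via Lemma~\ref{childAlt lemma} and phrase the reduction as sibling-equality rather than $TreeAlt(n_1)=TreeAlt(n)$), which do not change the substance.
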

\begin{proof}
We prove the lemma by induction on $height(n)$. 

In the base case, $height(n)=1$.  Hence, $ChildAlt(n)=TreeAlt(n)$. Also, $n=m$ by the assumption $height(m)=1$ of the lemma. Therefore, $ChildAlt(m)=TreeAlt(n)$.

For the induction step, suppose that $height(n)>1$. Let $n_1$ be the child of node $n$ such that the subtree rooted at node $n_1$ contains node $m$. 
Note that 
$height(n_1)<height(n)$ because $n_1$ is a child of node $n$. 
Then, $ChildAlt(m)=TreeAlt(n_1)$ by the induction hypothesis. Thus, it suffices to show that $TreeAlt(n_1) =TreeAlt(n)$.

Suppose the opposite. Consider any alternative
\begin{equation}\label{7-aug-b}
Y\in TreeAlt(n)\setminus TreeAlt(n_1).    
\end{equation}
Then, there must exist a child $n_2$ of node $n$ such that 
\begin{equation}\label{7-aug-a}
   Y\in TreeAlt(n_2). 
\end{equation}
Observe that node $n_2$ cannot be a leaf node by the assumption $height(n)>1$ of the induction case and Lemma~\ref{leaf's parent height 1}. Hence, by Lemma~\ref{Jia's lemma}, the subtree rooted at $n_2$ contains a node $n_3$ such that $height(n_3)=1$ and 
\begin{equation}\label{7-aug-c}
act(n_3)\neq act(n),    
\end{equation}
as shown below:

\begin{center}
\scalebox{0.5}{\includegraphics{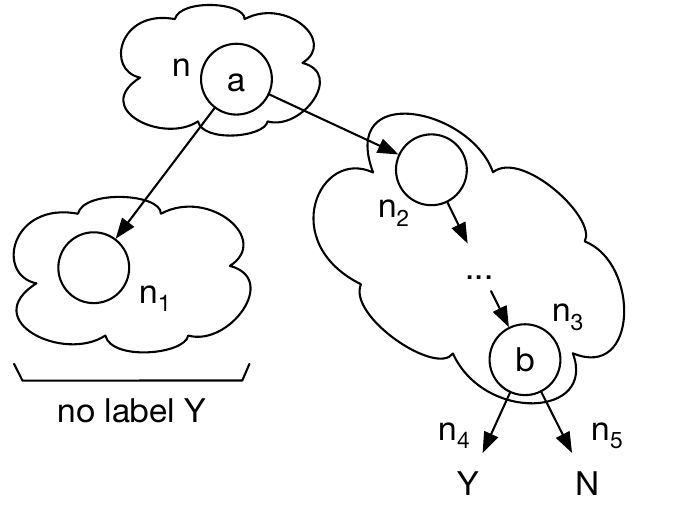}}
\vspace{-1mm}
\end{center}

Note $height(n_2)<height(n)$ because node $n_2$ is a child of node $n$. Thus, $ChildAlt(n_3)=TreeAlt(n_2)$
by the induction hypothesis. Hence, $Y\in ChildAlt(n_3)$ by statement~\eqref{7-aug-a}. Thus, 
node $n_3$ must have a child leaf node $n_4$ such that $\ell(n_4)=Y$. By Lemma~\ref{two children lemma}, node $n_3$ must also have another child leaf $n_5$ such that $\ell(n_5)\neq Y$.

First, note that agent $act(n)$ is responsible at node $n_4$. Indeed, because $Y\notin TreeAlt(n_1)$ by statement~\eqref{7-aug-b}, the agent had a strategy to prevent alternative $Y$ by transitioning the decision-making process from node $n$ to $n_1$.

Second, agent $act(n_3)$ is also responsible at node $n_4$. Indeed, because $\ell(n_5)\neq Y$, the agent had a strategy to prevent alternative $Y$ by transitioning the decision-making process from node $n_3$ to $n_5$.

Thus, agents $act(n)$ and $act(n_3)$ are both responsible at node $n_4$. Thus, $act(n)=act(n_3)$ by the assumption of the lemma that the mechanism is diffusion-free and Definition~\ref{diffusion}, which contradicts inequality~\eqref{7-aug-c}.
\end{proof}

\begin{lemma}\label{child= all alt}
If a diffusion-free mechanism is in canonical form, then $ChildAlt(n)=Alt$ for each node $n$ of height~1.    
\end{lemma}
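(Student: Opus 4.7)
The plan is to derive this as an immediate corollary of Lemma~\ref{child=tree lemma} by taking the outer node to be the root of the tree. Specifically, let $r$ denote the root of the mechanism's tree $T$. Since the subtree rooted at $r$ is all of $T$, the set $TreeAlt(r)$ equals the range of the labeling function $\ell$ on the leaves of $T$. By item~5 of Definition~\ref{game}, $\ell$ is surjective onto $Alt$, so $TreeAlt(r)=Alt$.

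Now take any node $n$ of height $1$ in the mechanism. Then $n$ lies in the subtree rooted at $r$ and has height $1$, so Lemma~\ref{child=tree lemma} (applied with the outer node being $r$ and the height-1 node being $n$) gives $ChildAlt(n)=TreeAlt(r)$. Combining with $TreeAlt(r)=Alt$ yields $ChildAlt(n)=Alt$, as required.

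There is essentially no obstacle here: the heavy lifting has already been done inside Lemma~\ref{child=tree lemma}, which propagates the invariant $ChildAlt=TreeAlt$ up and down the tree. The only observation one needs to add is that at the root the ``tree alternatives'' set coincides with $Alt$, which is exactly the surjectivity requirement built into Definition~\ref{game}. Thus the proof is a one-line specialization followed by invoking surjectivity.
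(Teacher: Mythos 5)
Your proof is correct and is essentially identical to the paper's: both apply Lemma~\ref{child=tree lemma} with the root $r$ as the outer node to get $ChildAlt(n)=TreeAlt(r)$, and then use the surjectivity of $\ell$ from item~5 of Definition~\ref{game} to conclude $TreeAlt(r)=Alt$. No issues.
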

\begin{proof}
Let $r$ be the root node of the mechanism. Then,  $ChildAlt(n)=AltTree(r)$ by Lemma~\ref{child=tree lemma}. At the same time, $AltTree(r)=Alt$ because labeling function $\ell$ is a surjection by item~5 of Definition~\ref{game}.     
\end{proof}

\subsection{Technical results}

We are now ready to prove the main result of this work.
\begin{theorem}\label{first main result}
Any diffusion-free mechanism is an elected dictatorship.    
\end{theorem}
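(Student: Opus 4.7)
The plan is to reduce to the canonical form and then exploit the lemmas of the previous subsection, which already encode all the structural restrictions on a diffusion-free canonical mechanism.

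First, given an arbitrary diffusion-free mechanism $M$, I note that the identity relation is a total bisimulation of $M$ with itself, so the set of total-bisimular partners of $M$ is non-empty. Picking one with the minimum number of nodes yields a mechanism $M'$ in canonical form (Definition~\ref{canonical}) that is totally bisimular to $M$. By Lemma~\ref{diffusion bisimilar}, $M'$ is also diffusion-free. By Lemma~\ref{elected bisimilar}, it suffices to prove that $M'$ is an elected dictatorship, since this property will transfer back to $M$.

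Now I work inside $M'$. For any leaf $n$ of $M'$, Lemma~\ref{leaf's parent height 1} tells us that the parent $p$ of $n$ has $height(p)=1$. By Lemma~\ref{child= all alt}, we have $ChildAlt(p)=Alt$, i.e.\ for every alternative $Y\in Alt$ there is a leaf child $n_Y$ of $p$ with $\ell(n_Y)=Y$. Letting $a=act(p)$, each such $n_Y$ lies in $\bigcup\ell^{-1}(\{Y\})\subseteq win_a(\{Y\})$ by item~1 of Definition~\ref{df:winning set}, and since $n_Y$ is a child of $p$ and $act(p)=a$, item~2 of the same definition gives $p\in win_a(\{Y\})$. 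This holds for every $Y\in Alt$, so by Definition~\ref{dict at node} agent $a$ is a dictator at node $p$. Since $p$ is an ancestor of the leaf $n$, every leaf of $M'$ has a dictator at one of its ancestors, so $M'$ is an elected dictatorship by Definition~\ref{elected dictatorship}.

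Applying Lemma~\ref{elected bisimilar} to the total bisimulation between $M$ and $M'$ then shows that $M$ is an elected dictatorship, which completes the proof. The entire argument is essentially assembly; the genuine work was packed into the canonical-form analysis, with Lemma~\ref{leaf's parent height 1} and Lemma~\ref{child= all alt} carrying the main load and ensuring that the last decision in any branch is always made by an agent who could have selected any alternative outright.
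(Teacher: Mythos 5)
Your proof is correct and follows essentially the same route as the paper: pass to a smallest totally bisimular mechanism (the canonical form), transfer diffusion-freeness and the elected-dictatorship property via Lemmas~\ref{diffusion bisimilar} and~\ref{elected bisimilar}, and then combine Lemma~\ref{leaf's parent height 1} with Lemma~\ref{child= all alt} to exhibit a dictator at the parent of each leaf. The only detail you gloss over is that every leaf actually has a parent, which the paper justifies by Lemma~\ref{no singleton mechanisms}; otherwise your argument matches the paper's, and you even spell out a couple of steps (the identity bisimulation, the backward-induction step for $win_a(\{Y\})$) more explicitly than the original.
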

\begin{proof}
Consider any diffusion-free mechanism $M$. Let $M'$ be any smallest (in terms of the number of nodes) mechanism totally bisimular to mechanism $M$. By Lemma~\ref{diffusion bisimilar} and Lemma~\ref{elected bisimilar}, it suffices to prove the statement of the theorem for mechanism $M'$. 

Note that mechanism $M'$ is in canonical form by Definition~\ref{canonical}. Consider any leaf node $m$. Node $m$ is not a root node by Lemma~\ref{no singleton mechanisms}.
Let $n$ be the parent of node $m$. Note that $height(n)=1$ by Lemma~\ref{leaf's parent height 1}. Thus, $Child(n)=Alt$ by Lemma~\ref{child= all alt}. Thus, $win_{act(n)}(Y)$ for each $Y\in Alt$ by Definition~\ref{df:winning set}.
Therefore, $act(n)$ is a dictator at node $n$ by Definition~\ref{dict at node}. Therefore, mechanism $m'$ 
is an elected dictatorship by Definition~\ref{elected dictatorship}.  
\end{proof}

The next observation about the special case of two-agent mechanisms can be proven directly or, as we do, derived from Theorem~\ref{first main result}. 
\begin{theorem}\label{two players main theorem}
Any diffusion-free two-agent mechanism is a dictatorship.   
\end{theorem}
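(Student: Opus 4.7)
The plan is to derive Theorem~\ref{two players main theorem} as a corollary of Theorem~\ref{first main result}, exploiting that with only two agents an elected dictatorship must collapse to an outright dictatorship at the root. I would start by passing to a canonical form $M'$ of the given two-agent diffusion-free mechanism $M$. By Lemma~\ref{diffusion bisimilar}, $M'$ is itself diffusion-free, so every canonical-form lemma proved earlier applies to it.

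The heart of the argument is to show that the root of $M'$ has height~1. If $height(\text{root}) \geq 2$, then $M'$ contains a node of height exactly~2 (for instance, the grandparent of any maximum-depth leaf). By Lemma~\ref{two non-leaf children lemma}, such a node has at least two children with distinct agent labels; with $|Ag|=2$ this means the two children exhaust both available agents. But Lemma~\ref{parent-child lemma} prohibits the parent from sharing its label with either child, leaving no label available for it --- a contradiction. Lemma~\ref{no singleton mechanisms} rules out $height(\text{root})=0$, so the root of $M'$ must have height exactly~1.

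With the root $r'$ at height~1, Lemma~\ref{child= all alt} gives $ChildAlt(r') = Alt$, and Definition~\ref{df:winning set} then yields $r' \in win_{act(r')}(\{Y\})$ for every $Y \in Alt$. Thus $act(r')$ is a dictator at $r'$, so $M'$ is itself a dictatorship.

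The final step is to push this dictator back to $M$, and here is the main subtlety: Lemma~\ref{elected bisimilar} only transfers the weaker property of elected dictatorship, so a separate argument is needed to transfer dictatorship at the root. To do so, I would show that if $r$ and $r'$ are the roots of $M$ and $M'$, then $rRr'$: totality gives some $x$ in $M$ with $xRr'$, and item~4 of Definition~\ref{def:bisimulation} applied to $r$ (an ancestor of $x$) produces an ancestor of $r'$ related to $r$; since $r'$ is its own only ancestor, this forces $rRr'$. Lemma~\ref{dict iff dict} then lifts the dictator at $r'$ to a dictator at $r$, giving the desired dictatorship of $M$.
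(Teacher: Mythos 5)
Your proof is correct, but it takes a genuinely different route from the paper. The paper first applies Theorem~\ref{first main result} to get an elected dictatorship and then invokes a determinacy argument: it views the mechanism as a two-player win--lose extensive-form game whose objective is to be the \emph{first} to reach a node where one is a dictator, and appeals to Zermelo's theorem to conclude that one of the two agents has such a winning strategy from the root, hence is a dictator. You instead bypass Theorem~\ref{first main result} entirely and argue structurally on the canonical form: with only two agents, Lemma~\ref{two non-leaf children lemma} and Lemma~\ref{parent-child lemma} jointly forbid any node of height~2, so the canonical form collapses to a single decision node whose children realize all of $Alt$ --- a strictly stronger structural conclusion than the theorem itself. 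Your argument trades the external game-theoretic determinacy result for a careful transfer of root-dictatorship across the bisimulation, and you correctly spot the subtlety that Lemma~\ref{elected bisimilar} is too weak for this; your derivation of $rRr'$ from totality and item~4 of Definition~\ref{def:bisimulation} is sound. One small citation point: Lemma~\ref{dict iff dict} is stated only in the direction from $T$ to $T'$, whereas you need the converse (from $r'$ back to $r$); this is harmless because Lemma~\ref{win-win lemma} is an ``iff'', so dictatorship at $r'$ transfers to $r$ directly from that lemma and Definition~\ref{dict at node}, but you should cite Lemma~\ref{win-win lemma} rather than Lemma~\ref{dict iff dict} for that step.
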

\begin{proof}
Consider any two-agent diffusion-free mechanism. By Theorem~\ref{first main result}, this mechanism is an elected dictatorship. Thus, by Definition~\ref{elected dictatorship}, along any decision path, there is a node at which one of the agents is a dictator. 
Then, the mechanism can be viewed as a win-lose extensive form game, where the objective is to be the {\em first} to become the dictator at a node. 
It is a well-known fact in game theory~\cite[Theorem 3.5.1, p.91]{b18} that one of the players in a two-player win-lose extensive form game has a winning strategy. In our case, this means that one of the two agents has a strategy, at the root node, to reach a node at which the agent is a dictator. Recall that at such a node, the agent has a strategy to guarantee any alternative. Hence, the agent has a strategy at a root node to guarantee any alternative. Therefore, this agent is a dictator by Definition~\ref{dictator}.    
\end{proof}

\section{Conclusion}

In this paper, we have shown that in a consecutive collective decision-making mechanism under which each party meaningfully contributes to the decision process, it is unavoidable that in some cases the responsibility will be diffused between more than one agent. The precise statement of this result, given in Theorem~\ref{first main result} and Theorem~\ref{two players main theorem}, depends on the number of agents involved in the decision-making process.

\bibliography{naumov}

\clearpage

\begin{center}
{\LARGE\bf Technical Appendix}    
\end{center}

\noindent{\bf Lemma~\ref{win-win lemma}.} {\em 
$n\in win_a(S)$ iff $n'\in win_a(S)$ for 
any subset $S\subseteq Alt$ of alternatives and
any two nodes $n$ and $n'$ of trees $T$ and $T'$, respectively, such that $nRn'$. 
}
\begin{proof}
We prove the lemma by induction on $height(n)+height(n')$. In the base case, $height(n)=height(n')=0$. Thus, $n$ and $n'$ are the leaf nodes. Hence, $\ell(n)=\ell(n')$ by item~1 of Definition~\ref{def:bisimulation}. Then, $\ell(n)\in S$ iff $\ell'(n')\in S$. Therefore, $n\in win_a(S)$ iff $n'\in win_a(S)$ by Definition~\ref{df:winning set}.

Suppose that $height(n)+height(n')>0$. Without loss of generality, we can assume that 
\begin{equation}\label{26-jul-height>0}
height(n)>0.    
\end{equation}

\noindent
$(\Rightarrow):$ Assume that
\begin{equation}\label{26-jul-n-win}
n\in win_a(S)    
\end{equation}
and consider the following three cases separately:

\vspace{1mm}
\noindent{\em Case 1}: $act(n)=a$.
By Definition~\ref{df:winning set}, the assumptions
\eqref{26-jul-height>0} and \eqref{26-jul-n-win} imply that node $n$ has child $m$ such that 
\begin{equation}\label{26-jul-a}
m\in win_a(S).    
\end{equation}
Then, $n\stackrel{a}{\to}m$ by Definition~\ref{df:to} and the assumption $act(n)=a$ of the case.
Hence, by item~2 of Definition~\ref{def:bisimulation} and the assumption $nRn'$ of the lemma, there is a node $m'$ such that 
\begin{align}
&n'\stackrel{a}{\to}m',\label{26-jul-b}\\
&mRm'.\label{26-jul-c}
\end{align}
Note that $height(m)<height(n)$ because node $m$ is a child of node $n$. Also, $height(m')\le height(n')$ by Definition~\ref{df:to} and statement~\eqref{26-jul-b}.
Thus, $height(m)+height(m')<height(n)+height(n')$.
Hence, $m\in win_a(S)$ iff $m'\in win_a(S)$ by the induction hypothesis and statement~\eqref{26-jul-c}. Thus, $m'\in win_a(S)$ by statement~\eqref{26-jul-a}.
Therefore, $n'\in win_a(S)$ by item~\ref{up lemma} of Lemma~\ref{up down lemma}.

\vspace{1mm}
\noindent{\em Case 2}:
$act(n)=act(n')\neq a$. The assumption~\eqref{26-jul-height>0} implies that node $n$ is not a leaf. Thus, node $n'$ is also not a leaf by the assumption $act(n)=act(n')$ of the case and item~4 of Definition~\ref{game}. 

Then, to show that $n'\in win_a(S)$, consider any child $m'$ of node $n'$. By item~3 of Definition~\ref{df:winning set} and the assumption $act(n')\neq a$ of the case, it suffices to show that 
\begin{equation}
m'\in win_a(S).\label{26-jul-d}
\end{equation}
Note that, 
$n'\stackrel{act(n)}{\to} m'$
by the assumption $act(n)=act(n')$ of the case and Definition~\ref{df:to}. 
Hence, by item~3 of Definition~\ref{def:bisimulation} and the assumption $nRn'$ of the lemma, there is node $m$ such that 
\begin{align}
&n\stackrel{act(n)}{\to} m,\label{26-jul-e}\\
&mRm'.\label{26-jul-f}
\end{align}
Then, by item~\ref{down lemma} of Lemma~\ref{up down lemma},
\begin{equation}
m\in win_a(S).\label{26-jul-g}    
\end{equation}

Note that $height(m')<height(n')$ because node $m'$ is a child of node $n'$. Also, $height(m)\le height(n)$ by Definition~\ref{df:to} and statement~\eqref{26-jul-e}.
Thus, $height(m)+height(m')<height(n)+height(n')$.
Hence, by statement~\eqref{26-jul-f} and the induction hypothesis,
$m\in win_a(S)$ iff $m'\in win_a(S)$. Therefore, statement~\eqref{26-jul-g} implies statement~\eqref{26-jul-d}.

\vspace{1mm}
\noindent{\em Case 3}: $act(n)\neq a$ and $act(n)\neq act'(n')$. The last inequality includes the situation when $act'(n')$ is not defined because $n'$ is a leaf node.

The assumption~\eqref{26-jul-height>0} implies that node $n$ is not a leaf node. Consider any child $m$ of this node. Then, by Definition~\ref{df:to},
\begin{equation}\label{26-jul-k}
n\stackrel{act(n)}{\to} m    
\end{equation}
and, by item~3 of Definition~\ref{df:winning set}, the assumption~$act(n)\neq a$ of the case and statement~\eqref{26-jul-n-win},
\begin{equation}
m\in win_a(S).  \label{26-jul-l}  
\end{equation}

By item~2 of Definition~\ref{def:bisimulation}, statement~\eqref{26-jul-k} and the assumption $nRn'$ of the lemma, there is a node $m'$ such 
that 
\begin{align}
&n'\stackrel{act(n)}{\to} m',\label{26-jul-m}\\
&mRm'.\label{26-jul-n}
\end{align}
By Lemma~\ref{no go lemma} and the assumption $act(n)\neq act'(n')$ of the case, statement~\eqref{26-jul-m} implies $n'=m'$. Hence, by statement~\eqref{26-jul-n},
\begin{equation}\label{26-jul-o}
mRn'.    
\end{equation}
Note that $height(m)+height(m')<height(n)+height(n')$ because $m$ is a child of node $n$ and $m'=n'$.
Hence, 
$m\in win_a(S)$ iff $n'\in win_a(S)$,
by statement~\eqref{26-jul-o} and the induction hypothesis.
Therefore, $n'\in win_a(S)$ by statement~\eqref{26-jul-l}.

\noindent
$(\Leftarrow)$ Assume that
\begin{equation}\label{29-jul-a-win}
n'\in win_a(S)    
\end{equation}
and consider the following three cases separately:

\vspace{1mm}
\noindent{\em Case 1:} $act(n)=a\neq act'(n')$. This case includes the situation when $n'$ is a leaf node and $act'(n')$ is not defined. Inequality~\eqref{26-jul-height>0} implies that node $n$ has at least one child $m$. Then, by the assumption $act(n)=a$ of the case,
\begin{equation}\label{29-jul-c}
  n\stackrel{a}{\to} m.  
\end{equation}
Hence, by the assumption $nRn'$ of the lemma and item~2 of Definition~\ref{def:bisimulation}, there is a node $m'$ such that $n'\stackrel{a}{\to} m'$ and $mRm'$. Then, $n'=m'$ by Lemma~\ref{no go lemma} and the assumption $a\neq act'(n')$ of the case. Thus, 
\begin{equation}\label{29-jul-b}
    mRn'.
\end{equation}
Note $height(m)+height(n')<height(n)+height(n')$ because $m$ is a child of node $n$.
Hence, 
$m\in win_a(S)$ iff $n'\in win_a(S)$,
by statement~\eqref{29-jul-b} and the induction hypothesis. Thus, $m\in win_a(S)$ by statement~\eqref{29-jul-a-win}. Therefore, $n\in win_a(S)$ by item~\ref{up lemma} of Lemma~\ref{up down lemma} and statement~\eqref{29-jul-c}.

\vspace{1mm}
\noindent{\em Case 2:} $act(n)=a= act'(n')$. Assumption $act'(n')=a$ implies that node $n'$ is not a leaf. Thus, by Definition~\ref{df:winning set} and the assumption $act'(n')=a$, node $n'$ must have at least one child $m'$ such that
\begin{equation}\label{29-jul-d}
   m'\in win_a(S). 
\end{equation}
Note that 
$n'\stackrel{a}{\to}m'$ by the assumption $act'(n')=a$. Hence, by the assumption $nRn'$ of the lemma and item~3 of Definition~\ref{def:bisimulation}, there must exist a node $m$ such that 
\begin{align}
&n\stackrel{a}{\to}m,\label{29-jul-e}\\
&mRm'.\label{29-jul-f}
\end{align}
Note that $height(m')<height(n')$ because node $m'$ is a child of node $n'$. Also, $height(m)\le height(n)$ by Definition~\ref{df:to} and statement~\eqref{29-jul-e}.
Thus, $height(m)+height(m')<height(n)+height(n')$.
Hence, by statement~\eqref{29-jul-f} and the induction hypothesis,
$m\in win_a(S)$ iff $m'\in win_a(S)$.
Then, $m\in win_a(S)$ by statement~\eqref{29-jul-d}.
Therefore,  item~\ref{up lemma} of Lemma~\ref{up down lemma} and statement~\eqref{29-jul-e} imply that $n\in win_a(S)$.

\vspace{1mm}
\noindent{\em Case 3:} $act(n)\neq a$. Note that $act(n)$ is defined because, by assumption~\eqref{26-jul-height>0}, node $n$ is not a leaf. 

Consider any child $m$ of node $n$ by Definition~\ref{df:winning set}, it suffices to show that $m\in win_a(S)$. Indeed, note that $n\stackrel{act(n)}{\to}m$ by Definition~\ref{df:to}. Hence, by item~2 of Definition~\ref{def:bisimulation} and assumption $nRn'$ of the lemma, there exists a node $m'$ such that 
\begin{align}
&n'\stackrel{act(n)}{\to}m',\label{29-jul-g}\\
&mRm'.\label{29-jul-h}    
\end{align}
Thus, statement~\eqref{29-jul-a-win} and the assumption $act(n)\neq a$ of the case, by item~\ref{down lemma} of Lemma~\ref{up down lemma}, imply
\begin{equation}\label{29-jul-i} 
m'\in win_a(S).    
\end{equation}

Note that $height(m)<height(n)$ because node $m$ is a child of node $n$. Also, $height(m')\le height(n')$ by Definition~\ref{df:to} and statement~\eqref{29-jul-g}.
Thus, $height(m)+height(m')<height(n)+height(n')$.
Hence, by statement~\eqref{29-jul-h} and the induction hypothesis,
$m\in win_a(S)$ iff $m'\in win_a(S)$.
Then, $m\in win_a(S)$ by statement~\eqref{29-jul-i}.
\end{proof}

\noindent{\bf Lemma~\ref{no-single-child lemma}.} {\em
A node of a mechanism in canonical form cannot have exactly one child.  
}
\begin{proof}
By Lemma~\ref{two children lemma}, any node with exactly one child cannot have height 1. Thus, the single child of this node cannot be a leaf node.  The rest of the proof of this lemma is similar to the proof of Lemma~\ref{parent-child lemma} using the diagram below.
\begin{center}
\scalebox{0.5}{\includegraphics{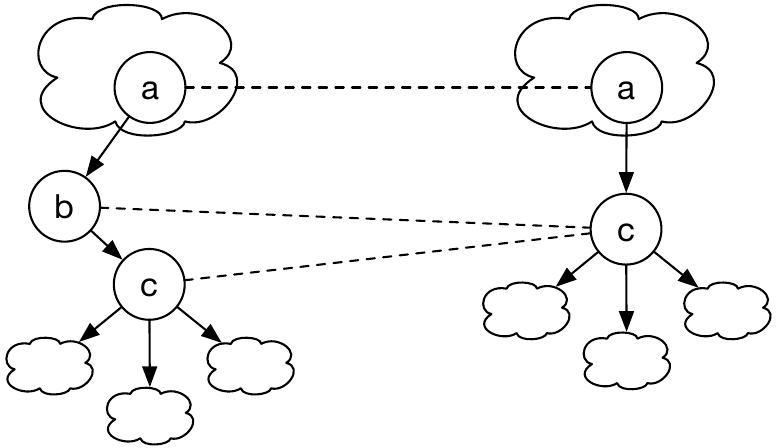}}
\end{center}   
\end{proof}

\noindent{\bf Lemma~\ref{no leaf-uncles lemma}.} {\em 
If a diffusion-free mechanism is in canonical form, then any child of a node of height 2 must have height~1.  
}

\begin{proof}
Consider any node $n$ of height 2 and any child $m$ of this node. Note that $height(m)\le height(n)-1=1$. Thus, it suffices to show that $m$ is not a leaf node. Suppose the opposite. 

Since $height(n)=2$, node $n$ must have at least one child $m_1$ of height 1. By Lemma~\ref{two children lemma}, node $m_1$ has at least two children labeled with different alternatives. Let $m_2$ be child of node $m_1$ such that $\ell(m)\neq \ell(m_2)$ and $m_3$ be another child of $m_1$ such $\ell(m_3)\neq \ell(m_2)$, as shown in the diagram below:
\begin{center}
\scalebox{0.5}{\includegraphics{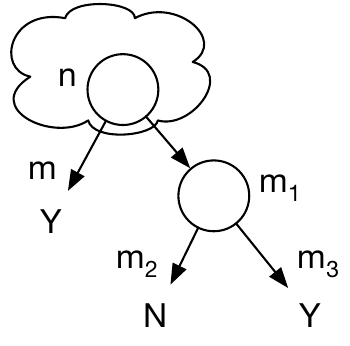}}
\vspace{-2mm}
\end{center} 

First, note that agent $act(n)$ is responsible at node $m_2$. Indeed, because $\ell(m)\neq \ell(m_2)$, the agent had a strategy to prevent alternative $\ell(m_2)$ by transitioning the decision-making process from node $n$ to $m$.

Second, agent $act(m_1)$ is also responsible at node $m_2$. Indeed, because $\ell(m_3)\neq \ell(m_2)$, the agent had a strategy to prevent alternative $\ell(m_2)$ by transitioning the decision-making process from node $m_1$ to $m_3$.

Thus, agents $act(n)$ and $act(m_1)$ are both responsible at node $m_2$. Thus, $act(n)=act(m_1)$ by the assumption of the lemma that the mechanism is diffusion-free and Definition~\ref{diffusion}. The last statement contradicts Lemma~\ref{parent-child lemma}.
\end{proof}

\noindent{\bf Lemma~\ref{childAlt lemma}.} {\em
If a diffusion-free mechanism is in canonical form, and $n_1$ and $n_2$ are children of a node of height 2, then $ChildAlt(n_1)=ChildAlt(n_2)$.
}
\begin{proof}
By Lemma~\ref{no leaf-uncles lemma}, the assumptions of the lemma imply $height(n_1)=height(n_2)=1$, see the diagram below.
\begin{center}
\scalebox{0.5}{\includegraphics{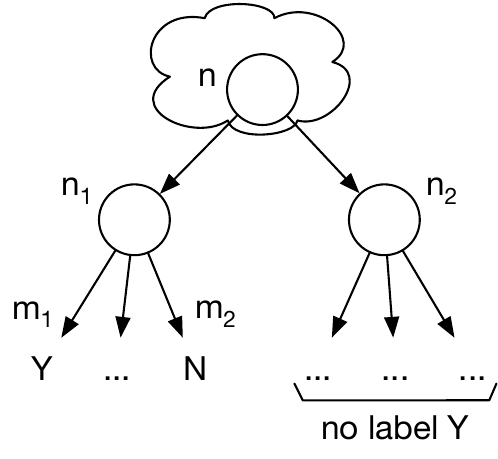}}
\vspace{-1mm}
\end{center}

Suppose $ChildAlt(n_1)\neq ChildAlt(n_2)$. Without loss of generality, let $Y\in ChildAlt(n_1)\setminus ChildAlt(n_2)$ for some $Y\in Alt$. Thus, node $n_1$ has a leaf child $m_1$ labeled with the alternative $Y\notin ChildAlt(n_2)$. Then, agent $act(n)$ is responsible at leaf node $m_1$ because the agent had a strategy to prevent alternative $Y$ by transitioning the decision-making process from node $n$ to node $n_2$.

At the same time, the assumption of the lemma that the mechanism is in canonical form, by Lemma~\ref{two children lemma}, implies that node $n_1$ has another leaf child, $m_2$, such that $\ell(m_1)\neq \ell(m_2)$. Then, agent $act(n_1)$ is responsible at leaf node $m_1$ because the agent had a strategy to prevent alternative $Y$ by transitioning the decision-making process from node $n_1$ to leaf node $m_2$.

Hence, agents $act(n)$ and $act(n_1)$ are both responsible at leaf node $m_1$. Then, $act(n)=act(n_1)$ by the assumption of the lemma that the mechanism is diffusion-free and Definition~\ref{diffusion}. The last statement contracts Lemma~\ref{parent-child lemma} because the mechanism is in canonical form.
\end{proof}

\end{document}